\title{Automata-Based Software Model Checking of Hyperproperties\thanks{This work was partially supported by the European Research Council (ERC) Grant HYPER (No. 101055412).}}
\author{Bernd Finkbeiner\inst{1} \and Hadar Frenkel\inst{1} \and Jana Hofmann\inst{2}\thanks{Research carried out while at CISPA Helmholtz Center for Information Security.} \and Janine Lohse\inst{3}}
\institute{CISPA Helmholtz Center for Information Security, Saarbrücken, Germany\\ \and
Azure Research, Microsoft, Cambridge, United Kingdom\\ \and
Saarland University, Saarbrücken, Germany}
\begin{document}

\newcommand{\janine}[1]{{\footnotesize{\color{teal} [Janine: #1]}}}
\newcommand{\hadar}[1]{{\footnotesize{\color{purple} [Hadar: #1]}}}

\newcommand{\aut}[1]{\ensuremath{\mathcal{#1}}}

\newcommand{\M}[1]{\ensuremath{\textit{#1}}} 
\newcommand{\imp}{\mathbin{\rightarrow}~}
\newcommand{\Imp}{\mathbin{\Rightarrow}~}
\renewcommand{\iff}{\mathbin{\leftrightarrow}}
\newcommand{\defeq}{\mathrel{\mathop:=}}

\newcommand{\dd}[2]{\ensuremath{#1_1,\dots,#1_{#2}}}  
\newcommand{\ddd}[2]{\ensuremath{#1_1\dots\,#1_{#2}}}  
\newcommand{\col}{\colon}
\newcommand{\set}[1]{\ensuremath{\{#1\}}}
\newcommand{\mset}[2]{\set{\,#1\mid#2\,}}
\newcommand{\eset}{\ensuremath{\emptyset}}
\newcommand{\incl}{\ensuremath{\subseteq}}
\newcommand{\bnfor}{~|~} 
\newcommand{\qedsq}{\hfill\ensuremath{_\blacksquare}}

\newcommand{\blankpage}{\newpage
\thispagestyle{empty}
\mbox{}
\newpage}

\newcommand{\Until}{\mathcal{U}}
\newcommand{\TODO}[1]{\colorbox{yellow}{TODO:} \textit{#1}}


\newcommand{\paths}{Paths}
\newcommand{\Lab}{L}
\newcommand{\Path}{\sigma}
\newcommand{\Traces}{\mathcal{L}}
\newcommand{\lmodels}{\models_{LTL}}
\newcommand{\UPredSeq}{Seq}


\newcommand{\Upd}[2]{\llbracket #1 \leftarrowtail #2 \rrbracket}
\newcommand{\Inputs}{\mathbb{I}}
\newcommand{\Cells}{\mathbb{C}}
\newcommand{\PredTerm}{\tau_P}
\newcommand{\PredTerms}{\mathcal{T}_P}
\newcommand{\FuncTerm}[1]{\tau_{F #1}}
\newcommand{\FuncTerms}{\mathcal{T}_{\mathcal{F}}}
\newcommand{\UpdTerms}{\mathcal{T}_U}
\newcommand{\Values}{\mathbb{V}}
\newcommand{\FEval}{\varepsilon}
\newcommand{\Eval}{\eta}
\newcommand{\Functions}{\mathcal{F}}
\newcommand{\Assign}{a}
\newcommand{\Assigns}{\mathsf{A}}
\newcommand{\Comput}{\zeta}
\newcommand{\FuncSymbs}{\mathbb{F}}


\newcommand{\TraceVs}{\Pi}
\newcommand{\HPredTerm}{\hat{\tau_P}}
\newcommand{\HPredTerms}{\hat{\mathcal{T}_P}}
\newcommand{\HFuncTerm}[1]{\hat{\tau_{F #1}}}
\newcommand{\HFuncTerms}{\hat{\mathcal{T}_F}}
\newcommand{\HAssign}{\hat{a}}
\newcommand{\HAssigns}{\hat{\mathsf{A}}}
\newcommand{\HUpdTerms}{\hat{\mathcal{T}_U}}
\newcommand{\HComput}{\hat{\Comput}}
\newcommand{\ExtComput}[3]{#1 [#2, #3]}
\newcommand{\Computs}{Z}

\newcommand{\aIn}[2]{in_{#1 = #2}}
\newcommand{\aCell}[2]{cell_{#1 = #2}}
\newcommand{\toLTL}[1]{t_{#1}}
\newcommand{\noUpd}{elimUpd}
\newcommand{\prev}[1]{\overleftarrow{#1}}
\newcommand{\relab}{relabel}
\newcommand{\plugin}[2]{[#1/#2]}
\newcommand{\predTrace}{P}

\newcommand{\feas}[1]{feasible(#1)}
\newcommand{\Stmt}{\mathit{Stmt}}
\newcommand{\BStmt}{\mathit{Stmt}_0}
\newcommand{\PTrace}{\sigma}
\newcommand{\matches}[2]{#1 \smalltriangleleft #2}
\newcommand{\matchest}[3]{#1 \smalltriangleleft_{#3} #2}
\newcommand{\BuechiProd}{\otimes}
\newcommand{\flatten}[1]{\mathit{flatten}(#1)}
\newcommand{\combine}[2]{\mathit{combine}(#1, #2)}
\newcommand{\edgel}{l}
\newcommand{\PredSet}{\rho}
\newcommand{\USet}{\upsilon}


\newcommand{\rememb}{remember}

\renewcommand\qed {{
\parfillskip=0pt 
\widowpenalty=10000 
\displaywidowpenalty=10000 
\finalhyphendemerits=0 
%
\leavevmode 
\unskip 
\nobreak 
\hfil 
\penalty50 
\hskip.2em 
\null 
\hfill 
$\square$
%
\par}} 

\maketitle

\begin{abstract}


We develop model checking algorithms 
for Temporal Stream Logic (TSL) and Hyper Temporal Stream Logic (HyperTSL) modulo theories.
TSL extends Linear Temporal Logic (LTL) with memory cells, functions and predicates, making it a convenient and expressive logic to reason over software and other systems with infinite data domains.
HyperTSL further extends TSL to the specification of hyperproperties -- properties that relate multiple system executions.
As such, HyperTSL can express information flow policies like noninterference in software systems.
We augment HyperTSL with theories, resulting in HyperTSL(T), and build on methods from LTL software verification to obtain model checking algorithms for TSL and HyperTSL(T).
This results in a sound but necessarily incomplete algorithm for specifications contained in the $\forall^*\exists^*$ fragment of HyperTSL(T).
Our approach constitutes the first software model checking algorithm for temporal hyperproperties with quantifier alternations that does not rely on a finite-state abstraction.

\end{abstract}
\section{Introduction}\label{sec:intro}
Hyperproperties~\cite{DBLP:journals/jcs/ClarksonS10} generalize trace properties~\cite{as85} to system properties, i.e., properties that reason about a system in its entirety and not just about individual execution traces.
Hyperproperties comprise many important properties that are not expressible as trace properties, e.g.,
information flow policies~\cite{DBLP:journals/jcs/ClarksonS10}, sensitivity and robustness of cyber-physical systems, and linearizability in distributed computing~\cite{bss18}.
For software systems, typical hyperproperties are program refinement or fairness conditions such as symmetry.

For the specification of hyperproperties, Linear Temporal Logic \cite{LTL} (LTL) has been extended with trace quantification, resulting in Hyper Linear Temporal Logic \cite{HyperLTL} (HyperLTL). 
There exist several model checking algorithms for HyperLTL~\cite{HyperLTL, HyperLTLModelChecking, VerifyingHyperliveness}, but they are designed for finite-state systems and are therefore not directly applicable to software. Existing algorithms for software verification of temporal hyperproperties (e.g.,~\cite{RelationalCorrectnessProofs, RelationalCorrectnessProofs2}) are, with the exception of \cite{RavensPaper}, limited to universal hyperproperties, i.e., properties without quantifier alternation.

In this paper, we develop algorithms for model checking software systems against $\forall^*\exists^*$ hyperproperties.
Our approach is complementary to the recently proposed approach of \cite{RavensPaper}. They require to be given a finite-state abstraction of the system, based on which they can both prove and disprove $\forall^*\exists^*$ hyperproperties. We do not require abstractions and instead provide sound but necessarily incomplete approximations to detect counterexamples of the specification.

The class of $\forall^*\exists^*$ hyperproperties contains many important hyperproperties like program refinement or \emph{generalized noninterference}~\cite{noninference}.
Generalized noninterference states that it is impossible to infer the value of a high-security input by observing the low-security outputs.
Unlike \textit{noninterference}, it does not require the system to be deterministic.
Generalized noninterference can be expressed as $\varphi_{gni} = \forall \pi\exists \pi'.\LTLglobally (i_{\pi'} = \lambda \wedge c_{\pi} = c_{\pi'})$.
The formula states that replacing the value of the high-security input $i$ with some dummy value $\lambda$ does not change the observable output $c$.

The above formula can only be expressed in HyperLTL if $i$ and $c$ range over a finite domain.
This is a real limitation in the context of software model checking, where variables usually range over infinite domains like integers or strings.
To overcome this limitation, our specifications build on Hyper Temporal Stream Logic (HyperTSL)~\cite{HyperTSL}.
HyperTSL replaces HyperLTL's atomic propositions with memory cells together with predicates and update terms over these cells. Update terms use functions to describe how the value of a cell changes from the previous to the current step.
This makes the logic especially suited for specifying software properties.

HyperTSL was originally designed for the synthesis of software systems, which is why all predicates and functions are uninterpreted.
In the context of model checking, we have a concrete system at hand, so we should interpret functions and predicates according to that system. 
We therefore introduce HyperTSL(T) -- HyperTSL with interpreted theories -- as basis for our algorithms.

\paragraph{Overview}
Following \cite{SoftwareModelCheckingAutomata}, we represent our system as a symbolic automaton labeled with program statements. 
Not every trace of such an automaton is also a valid program execution: for example, a trace $\mathit{assert}(n = 0)~; n --;~ (\mathit{assert}(n = 0))^\omega$ \footnote{The superscript $\omega$ denotes an infinite repetition of the program statement.} cannot be a program execution, as the second assertion will always fail. Such a trace is called \textit{infeasible}. In contrast, in a feasible trace, all assertions can, in theory, succeed.
As a first step, we tackle TSL model checking (Sec.~\ref{sec:tslMC}) by constructing a program automaton whose feasible accepted traces correspond to program executions that violate the TSL specification. To do so, we adapt the algorithm of~\cite{FairnessModTheory}, which constructs such an automaton for LTL, combining the given program automaton and an automaton for the 
negated specification.

 We then extend this algorithm for HyperTSL(T) formulas without quantifier alternation (Sec.~\ref{sec:altfree}) by applying \textit{self-composition}, a technique commonly used for the verification of hyperproperties \cite{SelfComposition1, SelfComposition2, SelfComposition3}. 

Next, in Sec.~\ref{sec:hyperMC}, we further extend this algorithm to finding counterexamples for $\forall^*\exists^*$-HyperTSL(T) specifications (and, dually, witnesses for $\exists^*\forall^*$ formulas). 
We construct an automaton that {over-approximates} the combinations of program executions that satisfy the existential part of the formula. If some program execution is not included in the over-approximation, this execution is a counterexample proving that the program violates the specification.

More concretely, for a HyperTSL(T) formula $\forall^m\exists^n \psi $, we construct the product of the automaton for $\psi$ and the $n$-fold self-composition of the program automaton. Every feasible trace of this product corresponds to a choice of executions for the variables $\pi_1, \dots, \pi_n$ such that $\psi$ is satisfied. Next, we remove (some) spurious witnesses by removing infeasible traces. We consider two types of infeasibility: \textit{$k$-infeasibility}, that is, a local inconsistency in a trace appearing within $k$ consecutive timesteps; and infeasibility that is not local, and is the result of some \textit{infeasible accepting cycles} in the automaton.
In the next step, we project the automaton to the universally quantified traces, obtaining an over-approximation of the trace combinations satisfying the existential part of the formula.
Finally, all that remains to check is whether the over-approximation includes all combinations of feasible traces.

Lastly, in Sec.~\ref{sec:HyperExamples}, we demonstrate our algorithm for two examples, including generalized noninterference.
    
\paragraph{Contributions.}
We present an automata-based algorithm for software model checking of $\forall^*\exists^*$-hyperproperties. We summarize our contributions as follows. 
\begin{itemize}
    \item We extend HyperTSL with theories, a version of HyperTSL that is suitable for model checking.
    \item  We adapt the approach of~\cite{FairnessModTheory} to TSL(T) and alternation-free HyperTSL(T), and thereby suggest the first model checking algorithm for both TSL(T) and HyperTSL(T). 
    \item We further extend the algorithm for disproving $\forall^*\exists^*$ hyperproperties and proving $\exists^*\forall^*$ hyperproperties using a feasibility analysis. 
\end{itemize} 
\subsubsection{Related Work}
Temporal stream logic extends linear temporal logic \cite{LTL} and was originally designed for synthesis \cite{TSL}. For synthesis, the logic has been successfully applied to synthesize the FPGA game `Syntroids' \cite{Syntroids}, and to synthesize smart contracts \cite{TSLSmartContracts}. To advance smart contract synthesis, TSL has been extended to HyperTSL in \cite{HyperTSL}. 
The above works use a version TSL that leaves functions and predicates uninterpreted. While this choice is very well suited for the purpose of synthesis, for model checking it makes more sense to use the interpretation of the program at hand. 
TSL was extended with theories in~\cite{TSLDec}, which also analyzed the satisfiability problem of the logic. 
Neither TSL nor HyperTSL model checking has been studied so far (with or without interpreted theories).

For LTL, the model checking problem for infinite-state models has been extensively studied, examples are~\cite{L2SIA-WFR, IC3, IC3SoftwareModelChecking, FairnessModTheory, DBLP:journals/jar/FrenkelGS19}.
Our work builds on the automata-based LTL software model checking algorithm from \cite{FairnessModTheory}.
There are also various algorithms for verifying universal hyperproperties on programs, for example, algorithms based on type theory \cite{RelationalCorrectnessProofs, RelationalCorrectnessProofs2}.
Major related work is \cite{RavensPaper}, which (in contrast to our approach) requires on predicate abstractions to model check software against $\forall^*\exists^*$ HyperLTL specifications.
They can also handle asynchronous hyperproperties, which is currently beyond our scope.
Another proposal for the verification of $\forall\exists$ hyperproperties on software is~\cite{FunctionalHyperproperties}. Here, generalized constrained horn clauses are used to verify functional specifications. The approach is not applicable to reactive, non-terminating programs.
Recently, it was also proposed to apply model checkers for TLA (a logic capable of expressing software systems as well as their properties) to verify $\forall^*\exists^*$ hyperproperties~\cite{DBLP:conf/csfw/LamportS21}.

Beyond the scope of software model checking, the verification of hyperproperties has been studied for various system models and classes of hyperproperties.
There exist model checking algorithms for $\omega$-regular properties~\cite{HyperLTLModelChecking,DBLP:conf/vmcai/Finkbeiner21} and asynchronous hyperproperties~\cite{DBLP:conf/cav/BaumeisterCBFS21, DBLP:conf/concur/BozzelliPS22} in finite-state Kripke structures, as well as timed systems~\cite{DBLP:conf/time/HoZ019}, real-valued~\cite{DBLP:conf/memocode/NguyenKJDJ17} and probabilistic hyperproperties~\cite{DBLP:conf/spin/AroraHLLP22,DBLP:conf/atva/DimitrovaFT20,DBLP:conf/fm/DobeABB21} (some of which study combinations of the above).

\section{Preliminaries}

\paragraph{A Büchi Automaton}
   is a tuple $\aut{A} = (\Sigma, Q, \delta, q_0, F)$ where $\Sigma$ is a finite alphabet; $Q$ is a set of states; $\delta \subseteq Q \times \Sigma \times Q$ is the transition relation; $q_0\in Q$ is the initial state; and $F\subseteq Q$ is the set of accepting states.
    A \emph{run} of the Büchi automaton $\aut{A}$ on a word $\sigma \in \Sigma^\omega$ is an infinite sequence $q_0~q_1~q_2 \dots \in Q^\omega$ of states such that for all $i \in \mathbb{N}, (q_i, \sigma_i, q_{i+1}) \in \delta$. An infinite word $\sigma$ is \emph{accepted} by $\aut{A}$ if there is a run on $\sigma$ with infinitely many $i \in \mathbb{N}$ such that $q_i \in F$. The language of $\aut{A}$, $\mathcal{L}(\aut{A})$, is the set of words accepted by $\aut{A}$.

\label{ch:TSL}

\subsection{Temporal Stream Logic Modulo Theories TSL(T)}\label{prelim:tsl}
Temporal Stream Logic (TSL) \cite{TSL} extends Linear Temporal Logic (LTL) \cite{LTL} 
by replacing Boolean atomic propositions with predicates over memory cells and inputs, and with \textit{update terms} that specify how the value of a cell should change. 

We present the formal definition of TSL modulo theories -- TSL(T), based on the definition of~\cite{TSLDec}, which extends the definition~\cite{TSL}. The definition we present is due to~\cite{TSL(T)} and it slightly differs from the definition of~\cite{TSLDec}; The satisfaction of an update term is not defined by syntactic comparison, but relative to the current and previous values of cells and inputs. This definition suites the setting of model checking, where a concrete model is given.

TSL(T) is defined based on a set of \textit{values} $\Values$ with $\mathit{true}, \mathit{false}\in \Values$, a set of inputs $\Inputs$ and a set of memory cells $\Cells$.  Update terms and predicates are interpreted with respect to a given theory. A \emph{theory} is a tuple $(\FuncSymbs, \FEval)$, where $\FuncSymbs$ is a set of function symbols; $\FuncSymbs_n$ is the set of functions of arity $n$; 
and $\FEval: \left( \bigcup_{n\in\mathbb{N}} \FuncSymbs_n \times \Values^n \right) \rightarrow \Values$ is the interpretation function, evaluating a function with arity $n$.
For our purposes, we assume that every theory $(\FuncTerms, \FEval)$ contains at least $\{=, \vee, \neg\}$ with their usual interpretations. 

A \emph{function term} $\FuncTerm{}$ is defined by the grammar
    \begin{align*}
       \FuncTerm{} ::= c~|~i~|~f (\FuncTerm{},~\FuncTerm{},~\dots~\FuncTerm{})
   \end{align*}
    where $c \in \Cells, i \in \Inputs, f \in \FuncSymbs$, and the number of elements in $f$ matches its arity. An \emph{assignment} $\Assign : (\Inputs \cup \Cells) \rightarrow \Values$ is a function assigning values to inputs and cells. We denote the set of all assignments by $\Assigns$.
Given a concrete assignment, we can compute the value of a function term.

The \emph{evaluation function} $\Eval: \FuncTerms \times \Assigns \rightarrow \Values$ is defined as
    \begin{align*}
       \Eval(c, \Assign) &=  \Assign(c) &&\text{for } c \in \Cells \\
       \Eval(i, \Assign) &= \Assign(i) &&\text{for } i \in \Inputs \\
     \Eval(f~(\FuncTerm{1}, \FuncTerm{2}, \dots ,\FuncTerm{n}), a) &= \FEval(f, (\Eval(\FuncTerm{1}), \Eval(\FuncTerm{2}), \dots , \Eval(\FuncTerm{n}) )) && \text{for } f \in \mathbb{F}      
    \end{align*}

A \emph{predicate term} $\PredTerm$ is a function term only evaluating to \emph{true} or \emph{false}.
    We denote the set of all predicate terms by $\PredTerms$.
    
 For $c \in \Cells$ and $\FuncTerm{} \in \FuncTerms$, $\Upd{c}{\FuncTerm{}}$ is called an \emph{update term}.
 Intuitively, the update term $\Upd{c}{\FuncTerm{}}$ states that $c$ should be updated to the value of $\FuncTerm{}$. If in the previous time step $\FuncTerm{}$ evaluated to $v \in \Values$, then in the current time step $c$ should have value $v$. The set of all update terms is $\UpdTerms$.
TSL formulas are constructed as follows, for 
$c \in \Cells, \PredTerm \in \PredTerms, \FuncTerm{} \in\FuncTerms$. 
$$\varphi ::= \PredTerm~|~\Upd{c}{\FuncTerm{}}~|~\neg \varphi~|~\varphi\wedge\varphi~|~\LTLnext \varphi~|~\varphi\LTLuntil\varphi~$$
The usual operators $\vee, \LTLeventually$ (``eventually"), and $\LTLglobally$ (``globally") can be derived using the equations
$\varphi \vee \psi = \neg (\neg \varphi \wedge \neg \psi),~\LTLeventually \varphi = \mathit{true}\,\LTLuntil\varphi$ and $\LTLglobally \varphi = \neg \LTLeventually \neg \varphi$. 

Assume a fixed initial variable assignment $\Comput_{-1}$ (e.g., setting all values to zero). The satisfaction of a TSL(T) formula with respect to a \textit{computation} $\Comput \in \Assigns^\omega$ and a time point $t$ is defined as follows, where we define $\Comput \models \varphi$ as $0, \Comput \models \varphi$. 
\begin{align*}
    &t, \Comput \models \PredTerm &&\Leftrightarrow \Eval(\PredTerm,\Comput_t) = \mathit{true} \\
    &t, \Comput \models \Upd{c}{\FuncTerm{}} &&\Leftrightarrow \Eval(\FuncTerm{}, \Comput_{t-1}) = \Comput_t(c) \\
    &t, \Comput \models \neg \varphi &&\Leftrightarrow \neg (t, \Comput \models \varphi) \\
    &t, \Comput \models \varphi \wedge \psi &&\Leftrightarrow t, \Comput \models \varphi \text{ and } t, \Comput \models \psi \\
    &t, \Comput \models \LTLnext \varphi &&\Leftrightarrow t + 1, \Comput \models \varphi \\
    &t, \Comput \models \varphi\LTLuntil\psi &&\Leftrightarrow \exists t' \geq t.~ t', \Comput \models \psi \text{ and } \forall  t \leq t'' < t'.~ t'', \Comput \models \varphi
\end{align*}

\section{HyperTSL Modulo Theories} \label{sec:HyperTSL} 
In this section, we introduce HyperTSL(T), HyperTSL with theories, which enables us to interpret predicates and functions depending on the program at hand.
In~\cite{HyperTSL}, two versions of HyperTSL are introduced: HyperTSL and HyperTSL$_{rel}$. The former is a conservative extension of TSL to hyperproperties, meaning that predicates only reason about a single trace.
In HyperTSL$_{rel}$, predicates may relate multiple traces, which opens the door to expressing properties like noninterference in infinite domains.
Here, we build on HyperTSL$_{rel}$, allowing, in addition,  update terms ranging over multiple traces. 
 Furthermore, we extend the originally uninterpreted functions and predicates with an interpretation over theories. We denote this logic by HyperTSL(T).
 
The syntax of HyperTSL(T) is that of TSL(T), with the addition that cells and inputs are now each assigned to a trace variable that represents a computation. For example, $c_{\pi}$ now refers to the memory cell $c$ in the computation represented by the trace $\pi$. Formally, let $\TraceVs$ be a set of trace variables. We define a \textit{hyper-function term} $\HFuncTerm{} \in \HFuncTerms$ as a function term using $(\Inputs \times \TraceVs)$ as the set of inputs and $(\Cells \times \TraceVs)$ as the set of cells.
\begin{definition}
    A \emph{hyper-function term} $\HFuncTerm{}$ is defined by the  grammar 
    $$
        \HFuncTerm{} ::= c_\pi~|~i_\pi~|~f (\HFuncTerm{},~\HFuncTerm{},~\dots~\HFuncTerm{})
    $$
    where $c_\pi \in \Cells \times \TraceVs, i_\pi \in \Inputs \times \TraceVs, f \in \FuncSymbs$, and the number of the elements in the tuple matches the function arity. We denote by $\HFuncTerms$  the set of all hyper-function terms.
\end{definition}

Analogously, we define \textit{hyper-predicate terms} $\HPredTerm \in \HPredTerms$ as hyper-function terms evaluating to \emph{true} or \emph{false}; \textit{hyper-assignments} $\HAssigns = (\Inputs \cup \Cells) \times \TraceVs \rightarrow \Values$ as functions mapping cells and inputs of each trace to their current values; \textit{hyper-computations} $\HComput \in \HAssigns^{\omega}$ as hyper-assignment sequences. See Fig. \ref{fig:first_system_example} for an~example.

\begin{figure}[t]
\begin{minipage}{0.35\textwidth}

\begin{center}
\begin{tikzpicture}[shorten >=1pt,node distance=2cm,/tikz/initial text =, every node/.style={scale=0.8, fill=white}]
    \tikzstyle{every state}=[]
  
    \node[state,initial]   (s)  {$s_0$};
    \node[state] (q_1) [below right=0.5cm and 0.8cm of s]  {$s_1$};
    \node[state] (q_2) [below left=0.5cm and 0.8cm of s] {$s_2$};
  
    \path[->]
    (s)   edge              node {c=0} (q_1)
    (q_1) edge [loop above]  node {c=1} (   )
            edge node {c=1} (q_2)
    (q_2) edge  node {c=2} (s);
  \end{tikzpicture}
\end{center}
  
\label{fig:first_system_example}
\end{minipage}%
\begin{minipage}{0.65\textwidth}
\begin{align*}
&\pi := (c=0)~ (c=1)^\omega,~
\pi' := ((c=0)~ (c=1)~ (c=2))^\omega \\
&\HAssign_1: \{c_\pi \mapsto 0, c_{\pi'} \mapsto 0\}, \HAssign_2 : \{c_\pi \mapsto 1, c_{\pi'} \mapsto 1\} \\
&\HAssign_3: \{c_\pi \mapsto 1, c_{\pi'} \mapsto 2\}, \HAssign_4 : \{c_\pi \mapsto 1, c_{\pi'} \mapsto 0\} \\
\end{align*}
\end{minipage}
\caption{Left: A program automaton. Right: two traces $\pi$ and $\pi'$ of the program automaton. We interpret each trace as a computation. When executing both traces simultaneously, every time point has a corresponding hyper-assignment that assigns values to $c_{\pi}$ and $c_{\pi'}$. Those for the first four time steps are shown on the right. Together, they define the hyper-computation $\HComput := \HAssign_1(\HAssign_2~\HAssign_3~\HAssign_4)^\omega$, matching $\pi$ and $\pi'$.}
\end{figure}
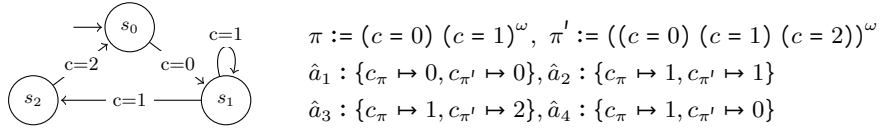

\begin{definition}
     Let  $c_\pi \in \Cells \times \TraceVs, \HPredTerm \in \HPredTerms, \HFuncTerm{} \in \HFuncTerms$. A \emph{HyperTSL(T) formula} is defined by the following grammar:
\begin{align*}
    \varphi &::= \psi~|~\forall \pi.~ \varphi~|~\exists \pi.~\varphi \\
    \psi &::= \HPredTerm~|~\Upd{c_\pi}{\HFuncTerm{}}~|~\neg \psi~| ~\psi\wedge\psi~|~\LTLnext \psi~|~\psi\LTLuntil\psi~
\end{align*}
\end{definition}
To define the semantics of HyperTSL(T), we need the ability to extend a hyper-computation to new trace variables, one for each path quantifier. 
Let $\HComput\in\HAssigns^\omega$ be a hyper-computation, and let
$\pi, \pi' \in \TraceVs, \Comput \in \Assigns^{\omega} $ and $ x \in (\Inputs \cup \Cells)$. We define the extension of $\HComput$ by $\pi$ using the computation $\Comput$ as $ \ExtComput{\HComput}{\pi}{\Comput}~(x_{\pi'})= \HComput(x_{\pi'}) $ for $ \pi' \neq \pi$, and $\ExtComput{\HComput}{\pi}{\Comput}~(x_\pi)= \Comput(x_\pi)$ for $\pi$.

\begin{definition}
    The \emph{satisfaction of a HyperTSL(T)-Formula} w.r.t. a hyper- computation $\HComput \in \HAssigns^\omega$, a set of computations $\Computs$ and a time point $t$ is defined by
    \begin{align*}
        &t, Z, \HComput \models \forall \pi.~\varphi &&\Leftrightarrow \forall \Comput \in \Computs.~t,~Z,~\ExtComput{\HComput}{\pi}{\Comput} \models \varphi \\
        &t, Z, \HComput \models \exists \pi.~\varphi &&\Leftrightarrow \exists \Comput \in \Computs.~t,~Z,~\ExtComput{\HComput}{\pi}{\Comput} \models \varphi \\
    \end{align*}
    The cases that do not involve path quantification are analogous to those of TSL(T) as defined in Sec.~\ref{prelim:tsl}. We define $Z \models \varphi$ as $0, Z, \emptyset^\omega \models \varphi$.
\end{definition}
\label{infinitestuff}

\section{Büchi Product Programs and TSL Model Checking}\label{sec:tslMC}
We now describe how we model the system and specification as Büchi automata, adapting the automata of~\cite{FairnessModTheory} to the setting of TSL. Then, we introduce our model checking algorithm for TSL(T). 
In Sec~\ref{sec:hyperMC} we build on this algorithm to propose an algorithm for HyperTSL(T) model checking. 

We use a symbolic representation of the system (see, for example,~\cite{SoftwareModelCheckingAutomata}), where transitions are labeled with program statements, and all states are accepting.

\begin{definition}\label{def:grammar} Let $c \in \Cells, \PredTerm \in \PredTerms$ and $\FuncTerm{} \in \FuncTerms$. We define the set of \emph{(basic) program statements} 
as  
\begin{align*}
    s_0 &::= \mathit{assert}(\PredTerm)~|~c:= \FuncTerm{}~|~c:=* \\
    s &::= s_0~|~s;s
\end{align*}
We call statements of the type $s_0$ \emph{basic program statements}, denoted by $\BStmt$; statements of type $s$ are denoted by $\Stmt$. 
The assignment $c:= *$ means that any value could be assigned to $c$.

\end{definition}


A \textit{program automaton} $\aut{P}$ is a 
Büchi automaton with $\Sigma = \Stmt$, that is,
$\aut{P} = (\Stmt, Q, q_0, \delta, F)$ and  $\delta \subseteq Q \times \Stmt \times Q$. 
When modeling the system we only need basic statements, thus we have $\Stmt = \BStmt$; and $F = Q$ as all states are accepting. See Fig.~\ref{fig:first_system_example} for  an illustration.

Using a program automaton, one can model \verb|if| statements, \verb|while| loops, and non-deterministic choices. However, not every trace of the program automaton corresponds to a program execution. For example, the trace $(n := \mathit{input}_1);\mathit{assert} (n > 0);\mathit{assert}(n < 0);~\mathit{assert}(true)^\omega$ does not -- the second assertion will always fail. Such a trace is called \textit{infeasible}. 
We call a trace \textit{feasible} if it corresponds to a program execution where all the assertions may succeed. We now define this formally.

\begin{definition} \label{def:matches}
    A computation $\Comput$ \emph{matches} a trace $\PTrace \in \BStmt^\omega$ at time point $t$, denoted by $\matchest{\Comput}{\PTrace}{t}$, if the following holds:
    \begin{align*}
            &\text{if } \PTrace_t = \mathit{assert}(\tau_P):&& \Eval(\PredTerm, \Comput_{t-1}) = true ~\text{ and }~ \forall c \in \Cells.~\Comput_t(c) = \Comput_{t-1}(c)  \\
            &\text{if }\PTrace_t = c:= \FuncTerm{}:&& \Eval(\FuncTerm{}, \Comput_{t-1}) = \Comput_t(c)  ~\text{ and }~ \forall c' \in \Cells \backslash \{c\}.~\Comput_t(c') = \Comput_{t-1}(c')  \\
            &\text{if } \PTrace_t = c := *: && \forall c \in \Cells\backslash\{c\}.~\Comput_t(c) = \Comput_{t-1}(c)
    \end{align*}
    where $\Comput_{-1}$ is the initial assignment.
    A computation $\Comput$ matches a trace $\PTrace \in \BStmt^\omega$, denoted by $\matches{\Comput}{\PTrace}$, if $\forall t \in \mathbb{N}.~\matchest{\Comput}{\PTrace}{t}$.

\end{definition}

\begin{definition}
    A program automaton $\aut{P}$ over $\BStmt$ satisfies a TSL(T)-formula $\varphi$, if for all traces $\PTrace$ of $P$ we have
 $   \forall \Comput \in \Assigns^\omega.~\matches{\Comput}{\PTrace} \Rightarrow \Comput \models \varphi$.
\end{definition}

We now present an algorithm to check whether a program automaton $\aut{P}$ satisfies a TSL(T) formula. It is an adaption of the automaton-based LTL software model checking approach by~\cite{FairnessModTheory}, where the basic idea is to first translate the negated specification $\varphi$ into an automaton $\aut{A}_{\neg \varphi}$, and then combine $\aut{A}_{\neg \varphi}$ and $\aut{P}$ to a new automaton, namely the \textit{Büchi program product}. The program satisfies the specification iff the Büchi program product accepts no feasible trace.

In \cite{FairnessModTheory}, the Büchi program product is constructed similarly to the standard product automata construction. To ensure that the result is again a program automaton, the transitions are not labeled with pairs $(s, \edgel) \in \BStmt \times 2^{AP}$, but with the program statement $(s;~ \mathit{assert}(\edgel))$. A feasible accepted trace of the Büchi program product then corresponds to a counterexample proving that the program violates the specification. In the following, we discuss how we adapt the construction of the Büchi program product for TSL(T) such that this property -- a feasible trace corresponds to a counterexample -- remains true for TSL(T). 

Let $\varphi$ be a TSL(T) specification. 
For the construction of $\aut{A}_{\neg \varphi}$, we treat all update and predicate terms as atomic propositions, resulting in an LTL formula $\neg\varphi_{\textit{LTL}}$, which is translated to a Büchi automaton.\footnote{For the translation of LTL formulas to Büchi automata, see, for example,~\cite{LTL_Buechi, LTL_Buechi2, LTL_Buechi_Tut}.} 
 For our version of the Büchi program product, we need to merge a transition label $s$ from $\aut{P}$
 with a transition label $\edgel$ from $\aut{A}_{\neg\varphi_{\textit{LTL}}}$ into a single program statement such that the assertion of the combined statement succeeds iff $\edgel$ holds for the statement $s$. Note that $\edgel$ is a set of update and predicate terms. For the update terms $\Upd{c}{\FuncTerm{}}$ we cannot just use an assertion to check if they are true, as we need to `save' the value of $\FuncTerm{}$ before the statement $s$ is executed.

Our setting differs from~\cite{FairnessModTheory} also in the fact that their program statements~do not reason over input streams. We model the behavior of input streams by using fresh memory cells that are assigned a new value at every time step.  
In the following, we define a function $\mathit{combine} $ that combines a program statement $s$ and a transition label $\edgel$ to a new program statement as described above.
\begin{definition}
Let $\USet = \{\Upd{c_1}{\FuncTerm{1}}, \dots, \Upd{c_n}{\FuncTerm{n}}\}$ be the set of update terms appearing in $\varphi$, let $\PredSet$ be the set of predicate terms appearing in $\varphi$. Let $\edgel \subseteq (\USet \cup \PredSet)$ be a transition label of $\aut{A}_{\neg \varphi}$.  Let $(tmp_j)_{j \in \mathbb{N}}$ be a family of fresh cells. Let $\Inputs = \{i_1, \dots i_m \}$. We define the function $\mathit{combine} : \Stmt \times \mathcal{P}(\PredTerms \cup \UpdTerms) \rightarrow \Stmt$ as follows. The result of $\combine{s}{\edgel}$ is composed of the program statements in $\mathit{save\_values}_\edgel, s, \mathit{new\_inputs}, \mathit{check\_preds}_\edgel$ and $\mathit{check\_updates}_\edgel$. Then we have: 
\begin{align*}
    \mathit{save\_values} &:= \mathit{tmp}_1 := \FuncTerm{1};~\dots ;\mathit{tmp}_n := \FuncTerm{n} \\
    \mathit{new\_inputs} &:= i_1 := *;~\dots~; i_m := *\\
    \mathit{check\_preds}_\edgel &:= assert \left(\bigwedge_{\PredTerm \in \edgel} \PredTerm \wedge \bigwedge_{\PredTerm \in \PredSet \backslash \edgel} \neg \PredTerm \right) \\
    \mathit{check\_updates}_\edgel &:= assert \left( \bigwedge_{\Upd{c_j}{\FuncTerm{j}} \in \USet}
    \begin{cases}
        c_j = \mathit{tmp}_j &\text{if } \Upd{c_j}{\FuncTerm{j}} \in \edgel\\
        c_j \neq \mathit{tmp}_j &\text{else} 
    \end{cases}        
        \right) \\
    \combine{s}{\edgel} &:= \mathit{save\_values};~s;~\mathit{new\_inputs};~\mathit{check\_preds}_\edgel;~\mathit{check\_updates}_\edgel
\end{align*}
\end{definition}

We can extend this definition to combining traces instead of single transition labels. 
This leads to a function $\mathit{combine} : \Stmt^\omega \times \mathcal{P}(\PredTerms \cup \UpdTerms)^\omega \rightarrow \Stmt^\omega$.
Note that the result of $\mathit{combine}$ is again a program statement in $\Stmt$ (or a trace $\Stmt^\omega$) over the new set of cells $\Cells \cup \Inputs \cup (tmp_j)_{j \in \mathbb{N}}$, which we call $\Cells^*$.

\begin{example}
    Let $\Inputs = \{i\}$. Then the result of $\combine{n := 42}{ \{ \Upd{n}{n + 7}, n > 0\}} $ is $\mathit{tmp}_0 := n + 7;~n := 42;~i := *;~ \mathit{assert} (n > 0);~ \mathit{assert} (n = \mathit{tmp}_0)$.
\end{example}

As $\mathit{combine}$ leads to composed program statements, we now need to extend the definition of feasibility to all traces. To do so, we define a function $\mathit{flatten}: \Stmt^\omega \rightarrow {\Stmt_0}^\omega$ that takes a sequence of program statements and transforms it into a sequence of basic program statements by converting a composed program statement into multiple basic program statements.

\begin{definition}
    A trace $\PTrace \in \Stmt^\omega$ \emph{matches} a computation $\Comput$, denoted by $\matches{\Comput}{\PTrace}$ if $\matches{\Comput}{\flatten{\PTrace}}$.
    A trace $\PTrace$ is \emph{feasible} if there is a computation $\Comput$ such that $\matches{\Comput}{\PTrace}$.
\end{definition}

\begin{definition}{\textbf{(Combined Product)}} 
    Let $\aut{P} = (Stmt, Q, q_0, \delta, Q)$ be a program automaton and $\aut{A} = (\mathcal{P}(\PredTerms \cup \UpdTerms), Q', q_0', \delta',F')$ be a Büchi automaton (for example, the automaton $\aut{A}_{\neg \varphi_{LTL}}$). The combined product $\aut{P} \BuechiProd \aut{A}$ is an automaton $\aut{B} = (Stmt, Q \times Q', (q_0, q_0'), \delta_B, F_{B})$, where 
    \begin{align*}
        F_{B} &= \{(q, q') \mid q \in Q \wedge q' \in F'\} \\
        \delta_B &= \{((p, q), \combine{s}{\edgel}, (p', q'))~|~(p, s, p') \in \delta \wedge (q, \edgel, q') \in \delta'\}
    \end{align*}
\end{definition}

\begin{theorem} \label{thm:BuechiProdModelChecking}
    Let $\aut{P}$ be a program automaton over $\BStmt$. Let $\varphi$ be a TSL(T) formula. Then $\aut{P}$ satisfies $\varphi$ if and only if $\aut{P} \BuechiProd \aut{A}_{\neg \varphi_{LTL}}$ has no feasible trace.
\end{theorem}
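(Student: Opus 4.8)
The plan is to establish a bijection-like correspondence between feasible traces of $\aut{P} \BuechiProd \aut{A}_{\neg\varphi_{LTL}}$ and pairs $(\PTrace, \Comput)$ where $\PTrace$ is a trace of $\aut{P}$, $\Comput$ matches $\PTrace$, and $\Comput \not\models \varphi$. Concretely, I would prove the two directions of the contrapositive: $\aut{P}$ does \emph{not} satisfy $\varphi$ iff $\aut{P}\BuechiProd\aut{A}_{\neg\varphi_{LTL}}$ \emph{has} a feasible trace. The ``only if'' direction: suppose some trace $\PTrace$ of $\aut{P}$ and computation $\Comput$ with $\matches{\Comput}{\PTrace}$ satisfy $\Comput \not\models\varphi$, i.e. $\Comput \models \neg\varphi$. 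Treating update and predicate terms as atomic propositions, $\Comput$ induces a word over $\mathcal{P}(\PredTerms \cup \UpdTerms)$ — at each step $t$, the label $\edgel_t$ collects exactly those predicate terms $\PredTerm$ with $\Eval(\PredTerm, \Comput_t)=\mathit{true}$ and those update terms $\Upd{c}{\FuncTerm{}}$ with $\Eval(\FuncTerm{}, \Comput_{t-1}) = \Comput_t(c)$. Since $\Comput\models\neg\varphi$ and the TSL(T) semantics of the non-quantifier fragment coincides with the LTL semantics under this atomic-proposition reading, this word is accepted by $\aut{A}_{\neg\varphi_{LTL}}$, giving an accepting run $q_0 q_1 \dots$. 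Running $\aut{P}$ on $\PTrace$ gives a run $p_0 p_1 \dots$, and the pair gives a run of $\aut{B}$ on $\combine{\PTrace}{\edgel}$, which is accepting because $\aut{A}_{\neg\varphi_{LTL}}$'s run is. The main work is to show $\combine{\PTrace}{\edgel}$ is feasible: I would construct the witnessing computation $\Comput^*$ over $\Cells^*$ by extending $\Comput$, setting the $tmp_j$ cells to the saved values $\Eval(\FuncTerm{j}, \Comput_{t-1})$ at the appropriate sub-steps, and choosing input-cell values consistently, then verify step by step that $\Comput^*$ matches $\flatten{\combine{\PTrace}{\edgel}}$: the $\mathit{save\_values}$ block fixes $tmp_j$, the copy of $s$ behaves as in $\matches{\Comput}{\PTrace}$, $\mathit{new\_inputs}$ re-randomizes inputs, and $\mathit{check\_preds}_\edgel$ and $\mathit{check\_updates}_\edgel$ succeed precisely because $\edgel$ was chosen to record the true predicates and the satisfied updates (and $c_j = tmp_j$ holds exactly when the update term is in $\edgel$, by our choice of saved values).

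For the ``if'' direction, suppose $\aut{B}$ has a feasible accepted trace $\PTrace_B$; by construction of $\delta_B$ it is of the form $\combine{\PTrace}{\edgel}$ for a trace $\PTrace$ of $\aut{P}$ and a word $\edgel$ accepted by $\aut{A}_{\neg\varphi_{LTL}}$ along a matching run. Feasibility gives a computation $\Comput^*$ over $\Cells^*$ matching $\flatten{\PTrace_B}$; restricting $\Comput^*$ to the cells and inputs $\Cells \cup \Inputs$ at the time points corresponding to the end of each $\mathit{combine}$ block yields a computation $\Comput$. I would then argue $\matches{\Comput}{\PTrace}$ (the embedded copy of $s$ in each block, together with the fact that $\mathit{save\_values}$ and $\mathit{check}$ blocks do not touch cells in $\Cells$ except via $s$, ensures the matching conditions of Def.~\ref{def:matches} transfer) and that, at each step, the assertions $\mathit{check\_preds}_\edgel$ and $\mathit{check\_updates}_\edgel$ succeeding forces $\edgel_t$ to be \emph{exactly} the set of predicate terms true in $\Comput$ and update terms satisfied by $\Comput$ at step $t$. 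Hence the word read by $\aut{A}_{\neg\varphi_{LTL}}$ is the canonical atomic-proposition encoding of $\Comput$, so acceptance of this word means $\Comput \models \neg\varphi_{LTL}$, i.e. $\Comput \not\models \varphi$, witnessing that $\aut{P}$ does not satisfy $\varphi$.

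The main obstacle, and where I would spend most of the care, is the bookkeeping around $\mathit{flatten}$ and the temporary cells: $\mathit{combine}$ turns one transition into a block of several basic statements, so ``time point $t$'' in $\Comput$ corresponds to a window of several time points in $\Comput^*$, and I must make precise the map between these granularities and check that the intermediate sub-steps impose no extra constraints that could spoil either direction. In particular I need: (i) the freshness of the $tmp_j$ cells, so that overwriting them in $\mathit{save\_values}$ never conflicts with $\aut{P}$'s own cells or with $\varphi$'s terms; (ii) that between saving $tmp_j := \FuncTerm{j}$ and the check $c_j = tmp_j$, the value of $tmp_j$ is untouched while $s$ and $\mathit{new\_inputs}$ run, so that the check really compares $\Comput_t(c_j)$ against $\Eval(\FuncTerm{j}, \Comput_{t-1})$ as required by the update-term semantics; and (iii) that modeling input streams by re-randomized cells via $\mathit{new\_inputs}$ faithfully reproduces the $\Inputs$-component of arbitrary computations — this is the point where the $c := *$ statement's ``any value'' semantics is essential. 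A secondary, more routine, point is the exact-label issue: $\aut{A}_{\neg\varphi_{LTL}}$ may have transitions labeled with partial sets of atomic propositions, but $\mathit{check\_preds}_\edgel$ asserts the positive \emph{and} negative literals for all predicate terms, so only runs whose labels are complete with respect to $\PredSet$ survive feasibility; I would note this is harmless since a standard LTL-to-Büchi translation can be taken (or massaged) to use complete labels, and similarly $\mathit{check\_updates}_\edgel$ forces completeness over $\USet$.
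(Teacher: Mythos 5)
Your proposal is correct and follows essentially the same route as the paper: the same reduction to a correspondence between feasible combined traces and counterexample computations, the same witnessing computation built by interleaving $tmp_j$-saves and input re-randomization (the paper's ``adapted computation'' $\widetilde{\Comput}$ of Lemma~\ref{lem:corr3}), the same restriction of a matching computation back to $\Cells\cup\Inputs$ at block boundaries (the ``reduced computation'' of Lemmas~\ref{lem:corr1} and~\ref{lem:corr2}), and the same TSL(T)-to-LTL semantic coincidence (Lemma~\ref{lem:TSL_LTL}). Your remark on complete versus partial transition labels of $\aut{A}_{\neg\varphi_{LTL}}$ is a detail the paper leaves implicit, but it does not change the argument.
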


\begin{proof}[sketch]
    If $\matches{\Comput}{\sigma}$ is a counterexample, we can construct a computation $\tilde\Comput$ that matches the corresponding combined trace in $\aut{P} \BuechiProd \aut{A}_{\neg \varphi_{LTL}}$, and vice versa. The formal construction is given in App.~\ref{sec:buechi_corr}. 
\end{proof}


We can now apply Thm. \ref{thm:BuechiProdModelChecking} to solve the model checking problem by testing whether $\aut{P} \BuechiProd \aut{A}_{\neg \varphi_{LTL}}$ does not accept any feasible trace, using the feasibility check in~\cite{FairnessModTheory} as a black box. 
The algorithm of~\cite{FairnessModTheory} is based on counterexample-guided abstraction refinement (CEGAR \cite{CEGAR}). Accepted traces are checked for feasibility.  
First, finite prefixes of the trace are checked using an SMT-solver. If they are feasible, a ranking function synthesizer is used to check whether the whole trace eventually terminates. If the trace is feasible, it serves as a counterexample. If not, the automaton is refined such that it now does not include the spurious counterexample trace anymore, and the process is repeated. For more details, we refer to \cite{FairnessModTheory}. 
The limitations of SMT-solvers and ranking function synthesizers also limit the functions and predicates that can be used in both the program and in the TSL(T) formula.

\section{HyperTSL(T) Model Checking}
We now turn to the model checking problem of HyperTSL(T). We start with alternation-free formulas and continue with $\forall^*\exists^*$ formulas. 

\subsection{Alternation-free HyperTSL(T)}\label{sec:altfree}

In this section, we apply the technique of self-composition to extend the algorithm of Sec.~\ref{sec:tslMC} to alternation-free HyperTSL(T).
First, we define what it means for a program automaton to satisfy a HyperTSL(T) formula.

\begin{definition}
    Let $\aut{P}$ be a program automaton over $\BStmt$, let $\varphi$ be a HyperTSL(T) formula and let $Z=\{\Comput \in \Assigns^\omega ~|~ \exists \PTrace.~\matches{\Comput}{\PTrace} \text{ and } \PTrace \text{ is a trace of }\aut{P} \}$. 
    We say that $\aut{P}$ \emph{satisfies} $\varphi$ if $Z \models \varphi$.
\end{definition}

\begin{definition}
    Let $\aut{P} = (\Stmt, Q, q_0, \delta, Q)$ be a program automaton. The \emph{$n$-fold self-composition} of $\aut{P}$ is $\aut{P}^n = (\Stmt', Q^n, q_0^n, \delta^n, Q^n)$, where $\Stmt'$ are program statements over the set of inputs $\Inputs \times \TraceVs$ and the set of cells $\Cells \times \TraceVs$ and where $Q^n = Q \times \dots \times Q$,  $q_0^n = (q_0, \dots , q_0)$ and
    \begin{align*}
        \delta^n = &\{((q_1, \dots, q_n), ((s_1)_{\pi_1}; \dots; (s_n)_{\pi_n}), (q_1', \dots, q_n')) \\
        & \quad \mid \forall 1 \leq i \leq n.~ (q_i, s_i, q_i') \in~\delta\}
    \end{align*}
where $(s)_{\pi}$ renames every cell $c$ used in $s$ to $c_{\pi}$ and every input $i$ to $i_{\pi}$.
\end{definition}

\begin{theorem} \label{thm:BuechiProdModelCheckingAFH}
    A program automaton $\aut{P}$ over $\BStmt$ satisfies a universal HyperTSL(T) formula $\varphi = \forall \pi_1.~ \dots \forall \pi_n.~\psi$ iff $\aut{P}^n \BuechiProd \aut{A}_{\neg \psi_{LTL}}$ has no feasible trace.
\end{theorem}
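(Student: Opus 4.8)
The plan is to reduce Theorem~\ref{thm:BuechiProdModelCheckingAFH} to the already-established Theorem~\ref{thm:BuechiProdModelChecking} by observing that the $n$-fold self-composition $\aut{P}^n$, together with the trace-variable renaming, turns a universal HyperTSL(T) formula into an ordinary TSL(T) formula over the enlarged signature $(\Cells\times\TraceVs)\cup(\Inputs\times\TraceVs)$. Concretely, for $\varphi=\forall\pi_1.\dots\forall\pi_n.~\psi$, the quantifier-free body $\psi$ is syntactically a TSL(T) formula over cells $\Cells\times\{\pi_1,\dots,\pi_n\}$ and inputs $\Inputs\times\{\pi_1,\dots,\pi_n\}$. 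So I would first set up this correspondence precisely: view $\aut{P}^n$ as a program automaton over $\BStmt$ in the signature $\Cells'=\Cells\times\TraceVs$, $\Inputs'=\Inputs\times\TraceVs$ (checking that the composed statement $(s_1)_{\pi_1};\dots;(s_n)_{\pi_n}$ is indeed a basic program statement after flattening, and that all states remain accepting), and view $\psi$ as a TSL(T) formula in that same signature. Then $\aut{P}^n\BuechiProd\aut{A}_{\neg\psi_{LTL}}$ is exactly the combined product from Sec.~\ref{sec:tslMC}, and Theorem~\ref{thm:BuechiProdModelChecking} already tells us that $\aut{P}^n$ satisfies $\psi$ (as a TSL(T) formula) iff this product has no feasible trace.

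The remaining — and main — work is to show that ``$\aut{P}^n$ satisfies the TSL(T) formula $\psi$'' coincides with ``$\aut{P}$ satisfies the HyperTSL(T) formula $\forall\pi_1.\dots\forall\pi_n.~\psi$'', i.e.\ $Z\models\varphi$ where $Z$ is the set of computations matching traces of $\aut{P}$. I would prove this via a bijection-style correspondence between: (a) tuples $(\Comput^1,\dots,\Comput^n)$ of computations of $\aut{P}$ together with witnessing traces $\PTrace^i$ of $\aut{P}$ with $\matches{\Comput^i}{\PTrace^i}$; and (b) computations $\HComput$ over $\Cells'\cup\Inputs'$ that match a trace of $\aut{P}^n$. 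The key lemma is that a trace $\PTrace=((s_1^t)_{\pi_1};\dots;(s_n^t)_{\pi_n})_{t\in\mathbb N}$ is a trace of $\aut{P}^n$ iff each projection $(s_i^t)_{t\in\mathbb N}$ is a trace of $\aut{P}$, and that $\HComput$ matches such a $\PTrace$ iff the projection $\HComput\!\restriction_{\pi_i}$ (reading off the $c_{\pi_i},i_{\pi_i}$ components) matches $(s_i^t)_t$ for each $i$ — this needs a careful look at \Cref{def:matches} and at how the cells $c_{\pi_i}$ for different $i$ do not interfere within one composed step (each sub-statement only touches its own trace's cells, leaving the others unchanged, so the semicolon-sequencing behaves as an independent product). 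Combined with the definition of the extension $\ExtComput{\HComput}{\pi}{\Comput}$ and the fact that the semantics of the quantifier-free body $\psi$ under $\HComput$ is literally the TSL(T) semantics of $\psi$ under $\HComput$ viewed as an assignment sequence, this yields: $Z\models\forall\pi_1\dots\forall\pi_n.~\psi$ iff every computation matching a trace of $\aut{P}^n$ satisfies $\psi$ iff ($\aut{P}^n$ satisfies $\psi$ as a TSL(T) formula).

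Chaining the two equivalences gives the theorem. I expect the main obstacle to be the bookkeeping in the correspondence of step (b): one must verify that matching a computation against the \emph{flattened} composed statement $\flatten{(s_1)_{\pi_1};\dots;(s_n)_{\pi_n}}$ decomposes cleanly into the $n$ independent matching conditions, and that no ``cross-talk'' arises from the intermediate assignments within a composed step (in particular that the ``frame condition'' $\forall c'\neq c.~\Comput_t(c')=\Comput_{t-1}(c')$ in \Cref{def:matches}, applied sub-statement by sub-statement, indeed leaves every other trace's cells untouched). A secondary subtlety is that $Z$ quantifies over computations that match \emph{some} trace, not over trace-computation pairs, so I should note that a universal formula's truth over $Z$ is insensitive to this: $\forall\pi_1\dots\forall\pi_n$ ranges over all $n$-tuples from $Z$, and every such tuple arises from some tuple of traces of $\aut{P}$, hence from a trace of $\aut{P}^n$, and conversely. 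Once these points are nailed down, the proof is essentially a reduction and the heavy lifting is already done by \Cref{thm:BuechiProdModelChecking}. The full details I would defer to an appendix.
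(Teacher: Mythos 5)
Your route is genuinely different from the paper's: you factor the theorem into (i) an application of \Cref{thm:BuechiProdModelChecking} to $\aut{P}^n$ viewed as a single program automaton over the product signature, and (ii) a correspondence between $n$-tuples of matched computations of $\aut{P}$ and matched computations of $\aut{P}^n$. The paper does not reduce to \Cref{thm:BuechiProdModelChecking} at all; it re-runs the entire adapted/reduced-computation construction with multi-trace indices (the hyper-versions of the three correspondence lemmas, Lemmas~\ref{lem:hcorr1}--\ref{lem:hcorr3}, plus Lemma~\ref{lem:TSL_LTL}). Your decomposition is conceptually cleaner and isolates the genuinely new content (the ``no cross-talk'' lemma for the projections $\HComput\!\restriction_{\pi_i}$, which is the paper's Lemma~\ref{lem:hcorr1} and the analogue of Lemma~\ref{lem:splittrace}), and you have correctly identified both that lemma and the quantifier-range subtlety about $Z$ as the places where the work lies.

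However, step (i) has a real gap as stated. \Cref{thm:BuechiProdModelChecking} is stated only for program automata over $\BStmt$, and $\aut{P}^n$ is not one: its labels are the composed statements $(s_1)_{\pi_1};\dots;(s_n)_{\pi_n}\in\Stmt$. More importantly, the paper's satisfaction relation ``$\aut{P}$ satisfies $\psi$'' is only defined for $\BStmt$-automata, and the obvious extension via $\flatten{\cdot}$ does not mean what you need: a computation matching the flattened trace has $n$ assignment steps per program step, so evaluating $\psi$ on it directly desynchronizes $\LTLnext$, $\LTLuntil$, and in particular the update terms $\Upd{c}{\FuncTerm{}}$, whose semantics compares $\Comput_{t-1}$ with $\Comput_t$. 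So before you can invoke \Cref{thm:BuechiProdModelChecking} you must either (a) generalize it to $\Stmt$-labelled automata with a block-sampled satisfaction relation, or (b) fold that sampling into your correspondence (ii). Either way you end up rebuilding the index arithmetic and the adapted/reduced computations for blocks of length $n$ --- which is precisely the content of the paper's Definitions~\ref{def:hadaptedComput} ff.\ and Lemmas~\ref{lem:hcorr1}--\ref{lem:hcorr3}. In other words, your reduction reorganizes the proof attractively but does not let you inherit the heavy lifting from \Cref{thm:BuechiProdModelChecking} for free; you should state and prove the generalized version of that theorem (or the block-sampling lemma) explicitly rather than deferring it, since that is where the actual argument lives.
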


\begin{theorem} \label{thm:BuechiProdModelCheckingAFH2}
    A program automaton $P$ over $\BStmt$ satisfies an existential HyperTSL(T) formula $\varphi = \exists \pi_1.~ \dots \exists \pi_n.~\psi$ iff $\aut{P}^n \BuechiProd \aut{A}_{\psi_{LTL}}$ has some feasible trace.
\end{theorem}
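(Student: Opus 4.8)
The plan is to mirror the proof of Theorem~\ref{thm:BuechiProdModelCheckingAFH} (the universal case), using the fact that the existential formula $\exists\pi_1\dots\exists\pi_n.\,\psi$ is the negation of the universal formula $\forall\pi_1\dots\forall\pi_n.\,\neg\psi$, and that $\aut{A}_{\psi_{LTL}}$ plays the role that $\aut{A}_{\neg(\neg\psi)_{LTL}}$ would play. Concretely, I would establish the following chain of equivalences. Unfolding the definition of satisfaction, $\aut{P}$ satisfies $\varphi$ iff $Z\models\varphi$, i.e. iff there exist computations $\Comput_1,\dots,\Comput_n\in Z$ with $0, Z, \emptyset^\omega[\pi_1,\Comput_1]\cdots[\pi_n,\Comput_n]\models\psi$. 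By definition of $Z$, each $\Comput_i$ being in $Z$ is equivalent to the existence of a trace $\PTrace_i$ of $\aut{P}$ with $\matches{\Comput_i}{\PTrace_i}$. So $\aut{P}$ satisfies $\varphi$ iff there are traces $\PTrace_1,\dots,\PTrace_n$ of $\aut{P}$ and a single hyper-computation $\HComput$ matching all of them (in the sense that its $\pi_i$-projection matches $\PTrace_i$) such that $\HComput\models\psi_{LTL}$, where as in Sec.~\ref{sec:tslMC} we treat hyper-predicate and hyper-update terms as atomic propositions and use the fact that $\HComput\models\psi$ iff the induced propositional word is accepted by $\aut{A}_{\psi_{LTL}}$.

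The second half is to connect this to feasible traces of $\aut{P}^n\BuechiProd\aut{A}_{\psi_{LTL}}$. Here I would reuse the correspondence from the proof of Theorem~\ref{thm:BuechiProdModelChecking}: a run of the combined product on a trace $\combine{s}{\edgel}$-sequence is feasible exactly when there is a computation realizing all the assignments and assertions, which (by the construction of $\mathit{combine}$, with its $\mathit{save\_values}$, $\mathit{new\_inputs}$, $\mathit{check\_preds}$ and $\mathit{check\_updates}$ blocks) forces the computation to match the underlying trace of $\aut{P}^n$ and to satisfy exactly the predicate/update literals dictated by $\edgel$ at each step. The $n$-fold self-composition $\aut{P}^n$ supplies traces $(s_1)_{\pi_1};\dots;(s_n)_{\pi_n}$ whose feasible witnesses are precisely tuples $(\Comput_1,\dots,\Comput_n)$ with $\matches{\Comput_i}{\PTrace_i}$, assembled into one hyper-computation over $(\Cells\cup\Inputs)\times\TraceVs$; products with $\aut{A}_{\psi_{LTL}}$ additionally enforce acceptance by $\aut{A}_{\psi_{LTL}}$, i.e. $\HComput\models\psi$. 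Since every state of $\aut{P}^n$ is accepting, the Büchi acceptance condition of the product coincides with that of $\aut{A}_{\psi_{LTL}}$, so an accepting (hence $\psi$-satisfying) run that is feasible is exactly a witness for the existential formula, and conversely. Chaining the two halves gives the theorem.

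The main obstacle I expect is bookkeeping rather than a conceptual gap: one must be careful that the single hyper-computation $\HComput$ simultaneously matches all $n$ traces, which is why self-composition (shared time axis, disjoint renamed cells $c_{\pi_i}$) is the right construction and why the fresh $tmp_j$ and input cells must be kept per-copy so that the $\mathit{combine}$ assertions for different $\pi_i$ do not interfere. I would also note the one genuine asymmetry with the universal case: there the statement is ``no feasible trace,'' so soundness needs only one direction carefully (spurious counterexamples are harmless), whereas here ``some feasible trace'' requires the full biconditional, so I must check both that every product-feasible accepting trace yields genuine witnessing computations and that every witnessing tuple of computations induces a product-feasible accepting trace — both of which follow from the $\mathit{combine}$/$\mathit{flatten}$ correspondence already used for Theorem~\ref{thm:BuechiProdModelChecking}, now applied coordinatewise. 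Finally, since $\psi$ contains no quantifiers, no projection or feasibility-refinement machinery from Sec.~\ref{sec:hyperMC} is needed; the proof stays entirely at the level of the exact (not over-approximated) feasibility check used in Sec.~\ref{sec:tslMC}.
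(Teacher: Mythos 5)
Your proposal is correct and follows essentially the same route as the paper: the paper dispatches the existential case by duality with the universal one (Thm.~\ref{thm:BuechiProdModelCheckingAFH}), whose proof is exactly the chain you describe — unfold satisfaction over $Z$, use the TSL-to-LTL correspondence (Lemma~\ref{lem:TSL_LTL}) to replace $\HComput \models \psi$ by acceptance of the induced predicate/update-term word by $\aut{A}_{\psi_{LTL}}$, and use the adapted/reduced-computation constructions (the coordinatewise $\mathit{combine}$/$\mathit{flatten}$ correspondence, Lemmas~\ref{lem:hcorr1}--\ref{lem:hcorr3}) to translate between feasibility of the product trace and tuples of matching computations. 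Your observations about the shared time axis, per-copy $tmp_j$ cells, and the need for both directions of the biconditional are consistent with what the appendix proof actually does.
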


\noindent The proofs of are analogous to the proof of Thm.~\ref{thm:BuechiProdModelChecking} and are provided in~App.~\ref{sec:BuechiProd_corr2}.
\subsection{ $\forall^* \exists^*$ HyperTSL(T)}\label{sec:hyperMC}

In this section, we present a sound but necessarily incomplete algorithm for finding counterexamples for $\forall^* \exists^*$ HyperTSL(T) formulas.\footnote{Note that the algorithms of Sec.~\ref{sec:tslMC} and Sec.~\ref{sec:altfree} are also incomplete, due to the feasibility test. However, the incompleteness of the algorithm we provide in this section is inherent to the quantifier alternation of the formula.}
Such an algorithm 
can also provide witnesses $\exists^* \forall^*$ formulas. As HyperTSL(T) is built on top of HyperLTL, we combine ideas from finite-state HyperLTL model checking~\cite{HyperLTLModelChecking}
with the algorithms of Sec.~\ref{sec:tslMC} and Sec.~\ref{sec:altfree}.

Let $\varphi = \forall^m \exists^n. \psi$. For HyperLTL model checking, \cite{HyperLTLModelChecking} first constructs an automaton containing the system traces satisfying $\psi_\exists := \exists ^n. \psi$, and then applies complementation to extract counterexamples for the $\forall\exists$ specification.
Consider the automaton $\aut{P}^n \BuechiProd \aut{A}_{\psi_{LTL}}$ from Sec.~\ref{sec:tslMC}, whose feasible traces correspond to the system traces satisfying $\psi_\exists$. If we would be able to remove all infeasible traces, we could apply the finite-state HyperLTL model checking construction.
Unfortunately, removing all infeasibilities is impossible in general, as the result would be a finite-state system describing exactly an infinite-state system. Therefore, the main idea of this section is to remove parts of the infeasible traces from $\aut{P}^n \BuechiProd \aut{A}_{\psi_{LTL}}$, constructing an over-approximation of the system traces satisfying $\psi_\exists$. 
A counterexample disproving $\varphi$ is then a combination of system traces that is not contained in the over-approximation. 

We propose two techniques for removing infeasibility. The first technique removes \textit{k-infeasibility} from the automaton, that is, a local inconsistency in a trace, occurring within $k$ consecutive time steps. When choosing $k$, there is a tradeoff: if $k$ is larger, more counterexamples can be identified, but the automaton construction gets exponentially larger. 

The second technique removes \textit{infeasible accepting cycles} from the automaton. It might not be possible to remove all of them, thus we bound the number of iterations. We present an example and then elaborate on these two methods.

\begin{example}
    The trace $t_1$ below is 3-infeasible, because regardless of the value of $n$ prior to the second time step, the assertion in the fourth time step will fail.
     $$   t_1 = (n--;~ \mathit{assert}(n >= 0))~ (n:= 1;~ \mathit{assert} (n >= 0))~ (n--;~ \mathit{assert} (n >= 0))^\omega $$
    In contrast, the trace 
    $t_2 = (n := *)~(n--;~ \mathit{assert} (n >= 0))^\omega$
    is not $k$-infeasible for any $k$, because the value of $n$ can always be large enough to pass the first $k$ assertions. Still, the trace is infeasible because $n$ cannot decrease forever without dropping below zero. If such a trace is accepted by an automaton, $n--;~\mathit{assert} (n >= 0)$ corresponds to an infeasible accepting cycle.    
\end{example}

\subsubsection{Removing $k$-infeasibility} 
 To remove $k$-infeasibility from an automaton, we  construct a new program automaton that `remembers' the $k-1$ previous statements. The states of the new automaton correspond to paths of length $k$ in the original automaton. We add a transition labeled with $l$ between two states $p$ and $q$ if we can extend the trace represented by $p$ with $l$ such that the resulting trace is $k$-feasible. Formally, we get:

\begin{definition}
    Let $k \in \mathbb{N}$, $\PTrace \in \Stmt^\omega$. We say that $\PTrace$ is \emph{$k$-infeasible} if there exists $j \in \mathbb{N}$ such that $\PTrace_j \PTrace_{j+1} \dots \PTrace_{j+k-1}; \mathit{assert}(true)^\omega$ is infeasible for all possible initial assignments $\Comput_{-1}$. We then also call the subsequence $\PTrace_{j}\PTrace_{j+1}\dots \PTrace_{j+k-1}$ infeasible.
    If a trace is not $k$-infeasible, we call it $k$-feasible.\footnote{Whether a subsequence $\PTrace_j\PTrace_{j+1} \dots \PTrace_{j+k-1}$ is a witness of k-infeasibility can be checked using an SMT-solver, e.g, \cite{Z3, SMT1, SMT2, SMT3}.}
\end{definition}

\begin{definition}
    Let $\aut{P} = (\Stmt, Q, q_0, \delta, F)$ be a program automaton. Let $k \in \mathbb{N}$. We define $\aut{P}$ without $k$-infeasibility, as $\aut{P}_k = (\Stmt, Q', q_0, \delta', F')$ where
    \begin{align*}
        Q' :=& \{(q_1,s_1,q_2 \dots ,s_{k-1},q_k) \mid (q_1, s_1, q_2) \in \delta \wedge \dots \wedge (q_{k-1}, s_{k-1}, q_k) \in \delta \}~ \cup \\
        &\{(q_0, s_0,q_1 \dots ,s_{k' - 1}, q_{k'}) \mid k' < k-1 \wedge (q_0, s_0, q_1) \in \delta \wedge \dots \\
        & \phantom{(q_0, s_0,q_1 \dots ,s_{k' - 1}, q_{k'}) \mid} \quad \wedge (q_{k'-1}, s_{k'-1}, q_{k'}) \in \delta  \} \\
        \delta' :=& \{((q_1,s_1,q_2\dots ,s_{k-1},q_k), s_k, (q_2,s_2, \dots ,q_k,s_k,q_{k+1})) \in Q' \times \Stmt \times Q' \\ &\quad \mid s_1 \dots s_k \text{ feasible} \}~\cup \\
        &\{((q_0,s_0,q_1\dots ,s_{k'-1},q_{k'}), s_{k'}, (q_0, s_0, \dots ,q_{k'},s_{k'},q_{k'+1})) \in Q' \times \Stmt \times Q' \\ &\quad \mid k' < k-1 \wedge s_0 \dots s_{k'} \text{ feasible} \} \\
        F' :=& \{(q_1,s_1,q_2 \dots ,s_{k-1},q_k) \in Q' \mid q_k \in F \}~\cup \\
        & \{(q_0, s_0,q_1 \dots ,s_{k' - 1}, q_{k'}) \in Q' \mid k' < k-1 \wedge q_{k'} \in F\}
    \end{align*}
\end{definition}

\begin{theorem} \label{thm:k_feasible}
   $\aut{P}_k$ accepts exactly the $k$-feasible traces of~$\aut{P}$.
\end{theorem}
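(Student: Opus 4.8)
The plan is to establish the two inclusions separately: every trace accepted by $\aut{P}_k$ is a $k$-feasible trace of $\aut{P}$, and every $k$-feasible trace of $\aut{P}$ is accepted by $\aut{P}_k$. The key observation tying the two together is that the states of $\aut{P}_k$ are exactly the paths of $\aut{P}$ of length up to $k$ (with the short prefixes handled as a special case near $q_0$), so a run of $\aut{P}_k$ carries with it a "sliding window" of the last $k$ statements read along the corresponding run of $\aut{P}$, and the transition relation $\delta'$ admits a step precisely when that length-$k$ window $s_1 \dots s_k$ is feasible.

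First I would fix the bookkeeping: given a run $\rho = r_0 r_1 r_2 \dots$ of $\aut{P}_k$ on a word $\sigma = s_k^{(0)} s_k^{(1)} \dots$ (writing $s_k^{(j)}$ for the $j$-th letter, matching the notation $s_k$ in $\delta'$), I would show by induction on $j$ that the state $r_j$ has the form $(q_{*}, \dots, q_{*+\ell})$ where $q_*\,q_{*+1}\dots q_{*+\ell}$ together with the statements interleaved in it is a genuine path of $\aut{P}$, that this path is a suffix of length $\min(j,\,k)+1$ states of the $\aut{P}$-run induced by $\sigma_0\dots\sigma_{j-1}$, and that the last state of $r_j$ equals the $\aut{P}$-state reached after reading $\sigma_0\dots\sigma_{j-1}$. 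The base case is $r_0 = (q_0)$ (the $k'=0$ prefix state), and the inductive step is just unfolding the definition of $\delta'$: a transition from $r_j$ labeled $\sigma_j$ exists only between states whose stored paths overlap in the prescribed shift-by-one way, so the induced $\aut{P}$-run is well defined and unique, and its projection to first components is exactly a run of $\aut{P}$ on $\sigma$. Since $F'$ consists precisely of the $\aut{P}_k$-states whose last component lies in $F$, and the last component of $r_j$ is the $j$-th state of the induced $\aut{P}$-run, the $\aut{P}_k$-run is accepting iff the induced $\aut{P}$-run is accepting; so $\sigma \in \mathcal{L}(\aut{P}_k)$ implies $\sigma$ is a trace of $\aut{P}$. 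For $k$-feasibility of such a $\sigma$: every step of $\delta'$ that produced $\rho$ required the corresponding length-$k$ window $s_1\dots s_k$ (or the shorter prefix window, for the initial segment) to be feasible; hence for every $j$ the block $\sigma_j\sigma_{j+1}\dots\sigma_{j+k-1}$ is feasible, which is exactly the negation of $k$-infeasibility, so $\sigma$ is $k$-feasible.

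Conversely, given a $k$-feasible trace $\sigma$ of $\aut{P}$, take its (an) accepting run $q_0 q_1 q_2 \dots$ in $\aut{P}$ and define the candidate $\aut{P}_k$-run by $r_j := (q_{\max(0,\,j-k+1)}, s^{(\max(0,\,j-k+1))}, \dots, s^{(j-1)}, q_j)$ — i.e. the window of the last $k$ states (or all $j+1$ of them when $j < k-1$). I would then check that each $r_j \in Q'$ (immediate, since it is a sub-path of an $\aut{P}$-run, and the short prefixes are exactly the extra $k' < k-1$ states allowed in $Q'$) and that $(r_j, \sigma_j, r_{j+1}) \in \delta'$: the shift-by-one overlap condition holds by construction, and the feasibility side condition $s_1\dots s_k$ feasible holds because $\sigma$ is $k$-feasible, so no length-$k$ window of $\sigma$ witnesses $k$-infeasibility. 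Finally $r_j \in F'$ iff $q_j \in F$, so the run is accepting; hence $\sigma \in \mathcal{L}(\aut{P}_k)$.

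The main obstacle I anticipate is purely the index juggling at the boundary: the definition of $Q'$ and $\delta'$ treats the first $k-1$ steps specially (the $k' < k-1$ clauses), and one has to be careful that the "growing prefix" states near $q_0$ connect seamlessly to the "full window of length $k$" states once enough steps have been taken, and that the acceptance condition is stated consistently across the two regimes. I would handle this by treating the prefix states of length $k'+1$ uniformly as windows of the $\aut{P}$-run of the appropriate (shorter) length, so that the $k'$-clause and the main clause are literally the same statement with $\min(j, k-1)$ in place of a fixed length; this makes the induction go through with a single argument rather than a case split. A secondary subtlety is that $k$-infeasibility is quantified over \emph{all} initial assignments $\Comput_{-1}$ while feasibility of an isolated window $s_1\dots s_k$ in $\delta'$ is (implicitly) for \emph{some} initial assignment — but these are reconciled exactly by the definition: a window being feasible-for-some-initial-assignment is the negation of being infeasible-for-all-initial-assignments, which is how $k$-infeasibility was defined, so no gap arises.
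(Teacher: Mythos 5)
Your proposal is correct and follows essentially the same route as the paper's proof: both directions are handled by the direct correspondence between runs of $\aut{P}$ and the sliding-window states of $\aut{P}_k$, with the feasibility side condition in $\delta'$ supplying exactly the $k$-feasibility of each window. Your version is somewhat more explicit than the paper's (in particular, you also verify that a word accepted by $\aut{P}_k$ projects to a genuine run of $\aut{P}$, which the paper's $\Leftarrow$ direction takes for granted), but the underlying argument is the same.
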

The proof follows directly from the construction above, see App.~\ref{sec:k_feasible_proof} for details. 

\subsubsection{Removing Infeasible Accepting Cycles} 
For removing infeasible accepting cycles, we first enumerate all simple cycles of the automaton (using, e.g.,~\cite{CycleFinding}), adding also cycles induced by self-loops. For each cycle $\varrho$ that contains at least one accepting state, we test its feasibility: first, using an SMT-solver to test if $\varrho$ is locally infeasible; then, using a ranking function synthesizer (e.g., \cite{Termination1, Termination2, Termination3}) to test if $\varrho^\omega$ is infeasible. If we successfully prove infeasibility, we refine the model, using the methods from \cite{SoftwareModelCheckingAutomata, TerminationRefinement}. This refinement is formalized in the following.

\begin{definition}
    Let $\aut{P}=(\Stmt, Q, q_0, \delta, F)$ be a program automaton. Let $\varrho = (q_1, s_1, q_2)(q_2, s_2, q_3)\dots(q_n, s_n, q_1)$ be a sequence of transitions of $\aut{P}$. We say that~$\varrho$ is an \emph{infeasible accepting cycle} if there is a $1 \leq j \leq n$ with $q_j \in F$ and $(s_1 s_2 \dots s_{n-1})^\omega$ is infeasible for all possible initial assignments $\Comput_{-1}$.
\end{definition}

\begin{definition}
    Let $\aut{P}$ be a program automaton and $C \subseteq (Q \times \Stmt \times Q)^\omega$ be a set of infeasible accepting cycles of $\aut{P}$.
    Furthermore, let
    $$\varrho = (q_1, s_1, q_2)(q_2, s_2, q_3)\dots(q_{n-1}, s_{n-1}, q_n) \in~C.$$
    The automaton $\aut{A}_\varrho$ for $\varrho$ is $ \aut{A}_\varrho = (\Stmt, Q=\{q_0, q_1, \dots q_n\}, q_0, \delta, Q \backslash \{q_0 \}) $ where 
   \begin{align*}
        \delta ~=~ &\{(q_0, s, q_0) \mid s \in \Stmt \} \\
        & \cup \{(q_j, s_j, q_{j+1}) \mid 1 \leq j < n \} \cup \{(q_0, s_1, q_2), (q_n, s_n, q_1)\}.
   \end{align*}
    \end{definition}
   Then, $\aut{A}_\varrho$ accepts exactly the traces that end with $\varrho^\omega$, without any restriction on the prefix. See Fig. \ref{fig:aut_infeasible_cycle} for an example. To exclude the traces of $\aut{A}_\varrho$ from $\aut{P}$, we define
    $
        \aut{P}_C := \aut{P} \backslash \left( \bigcup_{\varrho \in C} \aut{A}_\varrho \right)
    $.\footnote{For two automata $\aut{A}_1, \aut{A}_2$ we use $\aut{A}_1 \backslash \aut{A}_2$ to denote the intersection of $\aut{A}_1$ with the complement of $\aut{A}_2$, resulting in the language $\aut{L}(\aut{A}_1) \setminus\aut{L}(\aut{A}_2)$. }
    This construction can be repeated to exclude infeasible accepted cycles that are newly created in $\aut{P}_C$. We denote the result of iterating this process $k'$ times by~$\aut{P}_{C(k')}$.

\subsubsection{Finding Counterexamples for $\forall^* \exists^*$ HyperTSL(T)-Formulas} 
Consider now a HyperTSL(T) formula $\varphi = \forall^{1\cdots m}\exists ^{m+1\cdots n}.\psi$ and a program automaton $\aut{P}$.
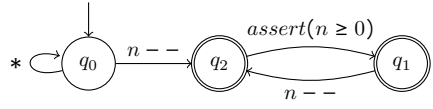
\begin{wrapfigure}{r}{0.5\textwidth}
    \vspace{-1cm}
    \begin{center}
    \begin{tikzpicture}[node distance= 3cm, /tikz/initial text =, every node/.style={scale=0.8}]
        \node[state, initial above] (s0) {$q_0$};
        \node[state, accepting] (s2) [right=1cm of s0] {$q_2$};
        \node[state, accepting] (s1) [right of=s2] {$q_1$};

        \path[->]
        (s0) edge [loop left] node {$*$} ( )
            edge [above] node {$n--$} (s2)
        (s1) edge [bend left = 15, below] node {$n--$} (s2)
        (s2) edge [bend left = 15, above] node {$assert(n \geq 0)$} (s1);
    
    \end{tikzpicture}
    \end{center}
    \vspace{-5mm}
    \caption{Automaton $\aut{A}_{\varrho}$ for the infeasible cycle
    $\varrho = (q_1,~n--,~q_2)(q_2,~ assert(n>0),~ q_1)$. Label $*$ denotes an edge for every (relevant) statement.}
    \label{fig:aut_infeasible_cycle}
    \vspace{-1cm}
\end{wrapfigure}
For finding a counterexample, we first construct the combined product $\aut{P}^n \BuechiProd \aut{A}_\psi$. 
Each feasible accepted trace of $\aut{P}^n \BuechiProd \aut{A}_\psi$ corresponds to a combination of $n$ feasible program traces that satisfy $\psi$. Next, we eliminate $k$-infeasibility and remove $k'$-times infeasible accepting cycles from the combined product, resulting in the automaton $(\aut{P}^n \BuechiProd \aut{A}_\psi)_{k, C(k')}$. Using this modified combined product, we obtain an over-approximation of the program execution combinations satisfying the existential part of the specification.
 Each trace of the combined product is a combination of $n$ program executions and a predicate/update term sequence. We then project the $m$ universally quantified program executions from a feasible trace, obtaining a tuple of $m$ program executions that satisfy the existential part of the formula. Applying this projection to all traces of $(\aut{P}^n \BuechiProd \aut{A}_\psi)_{k, C(k')}$ leads to an over-approximation of the program executions satisfying the existential part of the specification. Formally:

\begin{definition}
    Let $\aut{P}$ be a program automaton, let $m\leq n \in \mathbb{N}$, and let $\aut{A}_\psi$ be the automaton for the formula $\psi$. Let $(\aut{P}^n \BuechiProd \aut{A})_{k, C(k')} = (\Stmt, Q, q_0, \delta, F)$. We define the \emph{projected automaton} $(\aut{P}^m \BuechiProd \aut{A})_{k, C(k')}^\forall = (\Stmt, Q, q_0, \delta^\forall, F)$ where
 $   \delta^\forall = \{(q, (s_1; \dots ;~s_m), q') \mid \exists s_{m+1}, \dots s_n, \edgel.~ (q, \combine{s_1; \dots ;~s_n}{\edgel},q') \in \delta \} $.
\end{definition}
The notation $s_1;s_2$ refers to a sequence of statements, as given in Def.~\ref{def:grammar}. For more details on the universal projection we refer the reader to\cite{DBLP:conf/fsttcs/FinkbeinerP22}.

Now, it only remains to check whether the over-approximation contains all tuples of $m$ feasible program executions. If not, a counterexample is found. This boils down to testing if $\aut{P}^m \backslash (\aut{P}^n \BuechiProd \aut{A}_\psi)_{k, C(k')}^\forall$ has some feasible trace.
 Thm.~\ref{thm:corr_AE} states the soundness of our algorithm. See App.~\ref{sec:corr_AE} for its proof. 

\begin{theorem} \label{thm:corr_AE}
    Let $\varphi = \forall^{1\cdots m}\exists ^{m+1\cdots n}. \psi $ be a HyperTSL(T) formula. If the automaton $\aut{P}^m \backslash (\aut{P}^n \BuechiProd \aut{A}_\psi)_{k, C(k')}^\forall$ has a feasible trace, then $\aut{P}$ does not satisfy $\varphi$.
\end{theorem}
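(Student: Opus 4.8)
The plan is to prove the contrapositive direction that is actually claimed: exhibit, from a feasible trace of the difference automaton, an explicit counterexample to $\varphi = \forall^{1\cdots m}\exists^{m+1\cdots n}.\psi$. Concretely, suppose $\sigma$ is a feasible trace of $\aut{P}^m \backslash (\aut{P}^n \BuechiProd \aut{A}_\psi)_{k, C(k')}^\forall$, witnessed by a computation $\Comput$ with $\matches{\Comput}{\sigma}$. Since $\sigma \in \aut{L}(\aut{P}^m)$, I would first decompose $\Comput$ into $m$ individual program executions $\Comput_1, \dots, \Comput_m$, one per copy of $\aut{P}$ in the self-composition, each matching a genuine trace of $\aut{P}$; this uses the definition of the $m$-fold self-composition and the $\mathit{flatten}$/$\mathit{matches}$ machinery, essentially reversing the construction in the proof of Thm.~\ref{thm:BuechiProdModelCheckingAFH}. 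These $\Comput_1, \dots, \Comput_m$ are the universally quantified witnesses; I must show that no choice of $\Comput_{m+1}, \dots, \Comput_n \in Z$ makes $\psi$ hold, i.e. $0, Z, \hat\Comput \models \forall^{1\cdots m}\exists^{m+1\cdots n}.\psi$ fails.

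The core step is a chain of inclusions relating the relevant automata. By Thm.~\ref{thm:BuechiProdModelChecking}/\ref{thm:BuechiProdModelCheckingAFH} (adapted), the feasible traces of $\aut{P}^n \BuechiProd \aut{A}_\psi$ are exactly the combined traces whose matching computations correspond to $n$-tuples $(\Comput_1,\dots,\Comput_n)$ of program executions with $\hat\Comput \models \psi$. Removing $k$-infeasibility (Thm.~\ref{thm:k_feasible}) and infeasible accepting cycles only removes traces that have \emph{no} matching computation at all, so $(\aut{P}^n \BuechiProd \aut{A}_\psi)_{k,C(k')}$ still contains every feasible trace of $\aut{P}^n \BuechiProd \aut{A}_\psi$ — i.e. its feasible traces are a superset of the witnesses of $\psi_\exists$. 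Then the universal projection $\delta^\forall$ keeps a trace $(s_1;\dots;s_m)$ whenever \emph{some} extension to $n$ components with some label $\edgel$ survives; hence every $m$-tuple $(\Comput_1,\dots,\Comput_m)$ extendable to an $n$-tuple satisfying $\psi$ yields a feasible trace of $(\aut{P}^n \BuechiProd \aut{A}_\psi)^\forall_{k,C(k')}$. Contrapositively: if $\sigma$ (with matching $\Comput$, projecting to $\Comput_1,\dots,\Comput_m$) is \emph{not} in that projected automaton, then $(\Comput_1,\dots,\Comput_m)$ admits no such extension, so $0, Z, \hat\Comput[\pi_1,\dots,\pi_m \mapsto \Comput_1,\dots,\Comput_m] \not\models \exists^{m+1\cdots n}.\psi$. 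Since each $\Comput_i$ lies in $Z$ (it matches a trace of $\aut{P}$), this is exactly a violation of $\forall^{1\cdots m}\exists^{m+1\cdots n}.\psi$, so $\aut{P}$ does not satisfy $\varphi$.

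I would organize the write-up as: (1) a lemma that $\aut{L}_{\mathit{feas}}(\aut{P}_{k,C(k')}) = \aut{L}_{\mathit{feas}}(\aut{P})$ for any program automaton, combining Thm.~\ref{thm:k_feasible} with the observation that $\aut{A}_\varrho$ only captures traces ending in an infeasible $\varrho^\omega$ (hence infeasible); (2) a lemma characterizing feasible traces of $(\aut{P}^n\BuechiProd\aut{A}_\psi)^\forall$ via Thm.~\ref{thm:BuechiProdModelCheckingAFH2} and the projection definition; (3) the assembly of (1) and (2) into the contrapositive argument above, being careful that ``feasible'' for the projected automaton over $\Stmt^\omega$ is interpreted through $\mathit{flatten}$ as in Def.~\ref{def:matches}, and that the $tmp_j$ and fresh input cells introduced by $\mathit{combine}$ are correctly existentially discharged when passing between $\Comput$ over $\Cells^*$ and the $\Comput_i$ over $\Cells$.

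The main obstacle I anticipate is step (2): making precise the correspondence between a \emph{feasible} trace of the projected automaton $(\aut{P}^n\BuechiProd\aut{A}_\psi)^\forall_{k,C(k')}$ and the existence of an $n$-tuple of program executions satisfying $\psi$. Projection collapses the $n-m$ existential components and the $\edgel$-label into an existential quantifier at the \emph{syntactic} level, but feasibility is a \emph{semantic} property of the projected (shorter) statement sequence; I need that a matching computation for the projected trace can always be \emph{extended} to a matching computation for some full combined trace — essentially that the $c := *$ and fresh-input statements give enough nondeterminism, and that $k$-infeasibility/cycle removal did not eliminate the particular full trace whose projection I am looking at. This requires threading the ``removal only deletes globally infeasible traces'' invariant through the projection, and is where the soundness-but-incompleteness asymmetry of the algorithm really lives, so it deserves the most careful argument.
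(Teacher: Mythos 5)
Your proposal follows essentially the same route as the paper's proof: decompose the feasible trace of $\aut{P}^m$ into $m$ individual program executions (the paper's Lemma~\ref{lem:splittrace}), then argue that any choice of existential witnesses satisfying $\psi$ would yield a feasible combined trace that, being feasible, survives both the $k$-infeasibility and the cycle removal and hence projects into $(\aut{P}^n \BuechiProd \aut{A}_\psi)^\forall_{k,C(k')}$, contradicting membership in the difference automaton. The one remark worth making is that the ``main obstacle'' you anticipate largely dissolves: soundness never requires passing from a feasible trace of the \emph{projected} automaton back to a full combined trace --- only the easy direction is needed, namely that the projection of an accepted trace of $(\aut{P}^n \BuechiProd \aut{A}_\psi)_{k,C(k')}$ is accepted by the projected automaton, because the full combined trace is constructed directly from the hypothetical existential witnesses via the adapted-computation lemmas.
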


\section{Demonstration of the Algorithm} \label{sec:HyperExamples}

In this section, we apply the algorithm of Sec.~\ref{sec:hyperMC} to two simple examples, demonstrating that removing some infeasibilities can already be sufficient for identifying counterexamples.

\subsubsection{Generalized Noninterference}
Recall the formula $\varphi_{gni} = \forall \pi.~\exists \pi'.~\LTLglobally(i_{\pi'} = \lambda \wedge c_\pi = c_{\pi'})$ introduced in Sec.~\ref{sec:intro}, specifying generalized noninterference. 
We model-check $\varphi_{gni}$ on the program automaton $\aut{P}$ of Fig.~\ref{fig:gni_example_2} (left), setting $\lambda = 0$.
The program $\aut{P}$ violates $\varphi_{gni}$ since for the trace $(assert (i < 0)~ c:=0)^\omega$ there is no other trace where on which $c$ is equal, but $i=0$.

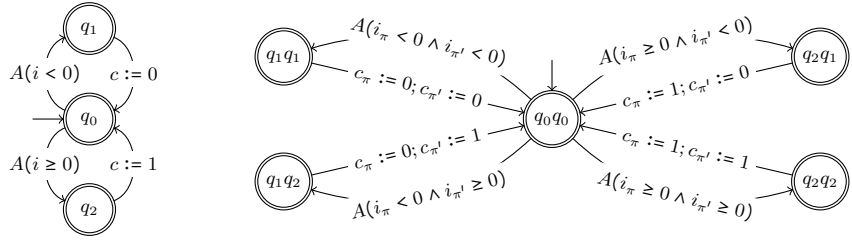
\begin{figure}[t]
\begin{minipage}{0.25\textwidth}
\begin{center}
\begin{tikzpicture}[node distance = 1.5cm,  /tikz/initial text =,every node/.style={scale=0.8}]
    \node[state, initial left, accepting] (s0) {$q_0$};
    \node[state, accepting] (s1) [above of=s] {$q_1$};
    \node[state, accepting] (s2) [below of=s] {$q_2$};

    \path[->]
    (s0) edge [bend left=70] node [fill=white] {$A(i < 0)$} (s1)
        edge [bend right=70] node  [fill=white] {$A(i \geq 0)$} (s2)
    (s1) edge [bend left=70] node [fill=white] {$c := 0$} (s0)
    (s2) edge [bend right=70] node [fill=white] {$c := 1$} (s0);
    
\end{tikzpicture}
\end{center}
\end{minipage}%
\hfill
\begin{minipage}{0.74\textwidth}
\begin{center}
\begin{tikzpicture}[ node distance = 4cm, /tikz/initial text =, every node/.style={scale=0.8, fill=white} ]
    \node[state, initial above, accepting] (s00) {$q_0q_0$};
    \node[state, accepting] (s11) [above left=0.3cm and 3cm of s00] {$q_1q_1$};
    \node[state, accepting] (s12) [below left=0.3cm and 3cm of s00] {$q_1q_2$};
    \node[state, accepting] (s21) [above right=0.3cm and 3cm of s00] {$q_2q_1$};
    \node[state, accepting] (s22) [below right=0.3cm and 3cm of s00] {$q_2q_2$};

    \path[->]
    (s00) edge [bend right=30,sloped] node {$A(i_\pi < 0 \wedge i_{\pi'} < 0)$} (s11)
        edge [bend right=30, sloped] node {$A(i_\pi \geq 0 \wedge i_{\pi'} \geq 0)$} (s22)
        edge [bend left=30, sloped] node {$A(i_\pi \geq 0 \wedge i_{\pi'} < 0 )$} (s21)
        edge [bend left =30, sloped] node {$A(i_\pi < 0 \wedge i_{\pi'} \geq 0)$} (s12)
    (s11) edge [sloped] node {$c_\pi := 0; c_{\pi'} := 0$} (s00)
    (s22) edge [sloped] node {$c_\pi := 1; c_{\pi'} := 1$} (s00)
    (s21) edge [sloped] node {$c_\pi := 1; c_{\pi'} := 0$} (s00)
    (s12) edge [sloped] node {$c_{\pi} := 0; c_{\pi'} := 1$} (s00);
    
\end{tikzpicture}
\end{center}
\end{minipage}
\caption{Left: The program automaton $\aut{P}$ used in the first example. Right: The program automaton $\aut{P}^2$. For brevity, we use $A$ for $assert$ and join consecutive assertions. }
\label{fig:gni_example_2}
\end{figure}

The automaton for $\psi = \LTLglobally(i_{\pi'} = 0 \wedge c_\pi = c_{\pi'})$ consists of a single accepting state with the self-loop labeled with $\PredTerm = (i_{\pi'} = 0 \wedge c_{\pi} = c_{\pi'})$. For this example, it suffices to choose $k=1$. To detect $1$-inconsistencies we construct $\aut{P}^2$ (Fig~\ref{fig:gni_example_2}, right). Then, $(\aut{P}^2 \BuechiProd \aut{A}_{\psi})_k$ is the combined product with all
 $1$-inconsistent 
 \begin{wrapfigure}{r}{0.5\textwidth}
\vspace{-3mm}
\begin{center}
\begin{tikzpicture}[node distance = 5cm, /tikz/initial text = ,every node/.style={scale=0.8, fill=white}]
    \node[state, initial above, accepting] (s00) {$q_0q_0$};
    \node[state, accepting] (s11) [above left=0cm and 2cm of s00] {$q_1q_1$};
    \node[state, accepting] (s12) [below left=0cm and 2cm of s00] {$q_1q_2$};
    \node[state, accepting] (s21) [above right=0cm and 2cm of s00] {$q_2q_1$};
    \node[state, accepting] (s22) [below right=0cm and 2cm of s00] {$q_2q_2$};

    \path[->]
    (s00) edge [bend right=15, sloped] node {$assert(i\geq 0)$} (s22)
        edge [bend left =15, sloped] node {$assert(i < 0)$} (s12)
    (s11) edge [bend left=15, sloped] node {$c:= 0$} (s00)
    (s22) edge [bend right=15, sloped] node {$c:= 1$} (s00);
    
\end{tikzpicture}
\end{center}
\caption{program automaton $(\aut{P}^2 \BuechiProd \aut{A}_{\psi})_k^\forall$}
\label{fig:gni_example_4}
\vspace{-3mm}
\end{wrapfigure}
 transitions removed (see Fig.~\ref{fig:gni_example_3} for the combined product).

\begin{figure}[t]
\begin{center}
    \tikzset{ every node/.style={scale=0.8, fill=white}}
\begin{tikzpicture}[node distance = 5cm, /tikz/initial text = ]
    \node[state, initial above, accepting] (s00) {$q_0q_0$};
    \node[state, accepting] (s11) [above left=0.3cm and 5cm of s00] {$q_1q_1$};
    \node[state, accepting] (s12) [below left=0.3cm and 5cm of s00] {$q_1q_2$};
    \node[state, accepting] (s21) [above right=0.3cm and 5cm of s00] {$q_2q_1$};
    \node[state, accepting] (s22) [below right=0.3cm and 5cm of s00] {$q_2q_2$};

    \path[->]
    (s00) edge [bend right=30, sloped] node {$i := *; A(i_\pi < 0 \wedge i_{\pi'} < 0 \wedge \PredTerm)$} (s11)
        edge [bend right=30, sloped] node {$i := *; A(i_\pi \geq 0 \wedge i_{\pi'} \geq 0 \wedge \PredTerm)$} (s22)
        edge [bend left=30, sloped] node {$i := *;A(i_\pi \geq 0 \wedge i_{\pi'} < 0 \wedge \PredTerm)$} (s21)
        edge [bend left =30, sloped] node {$i := *;A(i_\pi < 0 \wedge i_{\pi'} \geq 0 \wedge \PredTerm)$} (s12)
    (s11) edge [sloped] node {$i := *; c_\pi := 0; c_{\pi'} := 0; A(\PredTerm)$} (s00)
    (s22) edge [sloped] node {$i := *; c_\pi := 1; c_{\pi'} := 1; A(\PredTerm)$} (s00)
    (s21) edge [sloped] node {$i := *;c_\pi := 1; c_{\pi'} := 0; A(\PredTerm)$} (s00)
    (s12) edge [sloped] node {$i := *;c_{\pi} := 0; c_{\pi'} := 1; A(\PredTerm)$} (s00);
    
\end{tikzpicture}
\end{center}
\caption{The combined product $(\aut{P}^2 \BuechiProd \aut{A}_{\psi})$}
\label{fig:gni_example_3}
\end{figure}

The automaton $(\aut{P}^2 \BuechiProd \aut{A}_{\psi})_k^\forall$ is shown in Fig.~\ref{fig:gni_example_4}. 
It does not contain the trace $\sigma = \mathit{assert}(i < 0)~(c:=0)^\omega$ which is a feasible trace of $\aut{P}$. Therefore, $\sigma$ is a feasible trace accepted by $\aut{P}\backslash (\aut{P}^2 \BuechiProd \aut{A}_{\psi})_k^\forall$ and is a counterexample proving that $\aut{P}$ does not satisfy generalized noninterference -- there is no feasible trace that agrees on the value of the cell $c$ but has always $i=0$. 

\subsubsection{The Need of Removing Cycles} 

We now present an example in which removing $k$-infeasibility is not sufficient, but removing infeasible accepting cycles leads to a counterexample. Consider the specification
$
    \varphi = \forall \pi\exists \pi'.\LTLglobally (p_\pi \neq p_{\pi'} \wedge n_\pi < n_{\pi'})
$
and the program automaton $\aut{P}_{cy}$ of Fig. \ref{fig:cycle_example_1}.
The formula $\varphi$ states that for every trace $\pi$, there is another trace $\pi'$ which differs from $\pi$ on $p$, but in which $n$ is always greater. The trace $\pi = (n:= *); (p := *); \mathit{assert}(p = 0); (n--)^\omega$ is a counterexample for $\varphi$ in $\aut{P}_{cy}$  as any trace $\pi'$ which differs on $p$ will decrease its $n$ by $2$ in every time step, and thus $n_{\pi'}$ will eventually drop below $n_\pi$.

\begin{figure}[t]
\begin{minipage}{0.19\textwidth}
\begin{center}
\begin{tikzpicture}[node distance = 1.7cm, /tikz/initial text =, every node/.style={scale=0.8, fill=white}]
    \node[state, initial, accepting] (s0) {$q_0$};
    \node[state, accepting] (s1) [below of=s0] {$q_1$};
    \node[state, accepting] (s2) [below of=s1] {$q_2$};
    \node[state, accepting] (s3) [below left= 1cm and 0.5cm of s2] {$q_3$};
    \node[state, accepting] (s4) [below right=1cm and 0.5cm of s2] {$q_4$};

    \path[->]
    (s0) edge node {$n := *$} (s1)
    (s1) edge node {$p := *$} (s2)
    (s2) edge [bend right=15] node {$A(p = 0)$} (s3)
    (s2) edge [bend left=15] node {$A(p \neq 0)$}  (s4)
    (s3) edge [loop below] node {$n--$} ( )
    (s4) edge [loop below] node {$n := n-2$} ( );
    
\end{tikzpicture}
\end{center}
\end{minipage}%
\hfill
\begin{minipage}{0.79\textwidth}
\begin{center}
\begin{tikzpicture}[node distance = 3cm, /tikz/initial text =, every node/.style={scale=0.8, fill=white}]
    \node[state, initial, accepting] (s0s0) {$q_0q_0$};
    \node[state, accepting] (s1s1) [below =0.7cm of s0s0] {$q_1q_1$};
    \node[state, accepting] (s2s2) [below =0.7cm of s1s1] {$q_2q_2$};
    \node[state, accepting] (s3s4) [left =3cm of s2s2] {$q_3q_4$};
    \node[state, accepting] (s4s3) [right =3cm of s2s2] {$q_4q_3$};
    \node[state, accepting] (s3s3) [below left = 1cm and 1.5cm of s2s2] {$q_3q_3$};
    \node[state, accepting] (s4s4) [below right = 1cm and 1.5cm of s2s2] {$q_4q_4$};

    \path[->]
    (s0s0) edge node {$n_{\pi} := *; n_{\pi'} := *$} (s1s1)
    (s1s1) edge node {$p_{\pi} := *; p_{\pi'} := *$} (s2s2)
    (s2s2) edge [bend right=30] node [below] {$A(p_\pi = 0 \wedge p_{\pi'} = 0)$} (s3s3)
    (s2s2) edge [bend left=30] node [below] {$A(p_\pi \neq 0 \wedge p_{\pi'} \neq 0)$}  (s4s4)
    (s2s2) edge node {$A(p_\pi = 0 \wedge p_{\pi'} \neq 0)$} (s3s4)
    (s2s2) edge node {$A(p_\pi \neq 0 \wedge p_{\pi'} = 0 )$} (s4s3)
    (s3s3) edge [loop below] node {$n_\pi--; n_{\pi'}--$} ( )
    (s4s4) edge [loop below] node {$n_\pi := n_\pi-2; n_{\pi'} := n_{\pi'}-2$} ( )
    (s4s3) edge [loop above] node[align=center] {$n_\pi := n_\pi-2;$ \\ $n_{\pi'}--$} ( )
    (s3s4) edge [loop above] node[align=center] {$n_\pi--;$ \\ $n_{\pi'} := n_{\pi'}-2$} ( );

\end{tikzpicture}
\end{center}
\end{minipage}
\caption{Left: The program automaton $\aut{P}_{cy}$, Right: The program automaton $\aut{P}_{cy}^2$.}
\label{fig:cycle_example_1}
\end{figure}
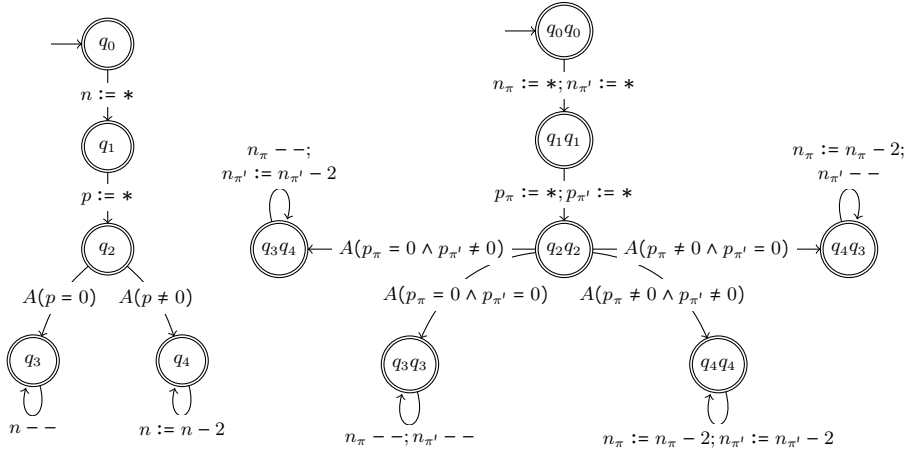

The automaton 
$\aut{P}_{cy}^2$ is shown in Fig. \ref{fig:cycle_example_1}.
In the combined product, the structure of the automaton stays the same, and $\mathit{assert}(p_\pi \neq p_{\pi'} \wedge n_\pi < n_\pi')$ is added to every state. 
Removing local $k$-infeasibilities is not sufficient here; assume $k =1$. The only $1$-infeasible transition is the transition from $q_2q_2$ to $q_3q_3$, and this does not eliminate the counterexample $\pi$. Greater $k$'s do not work as well, as the remaining traces of the combined product are not $k$~infeasible for any~$k$.

However, the self-loop at $q_3q_4$ is an infeasible accepting cycle -- the sequence \\ $(n_\pi--;~ n_{\pi'} := n_{\pi'} - 2;~ \mathit{assert} (n_{\pi} < n_{\pi'}))^\omega$ must eventually terminate. We choose $k'=1$ removing all traces ending with this cycle. Next, we project the automaton to the universal part. The trace $\pi$ is not accepted by the automaton $(\aut{P}^2 \BuechiProd \aut{A}_\psi)^\forall_{1, C(1)}$. But since $\pi$ is in $\aut{P}$ and feasible, it is identified as a counterexample.
	\section{Conclusions}
We have extended HyperTSL with theories, resulting in HyperTSL(T), and provided the first infinite-state model checking algorithms for both TSL(T) and HyperTSL(T). As this is the first work to study (Hyper)TSL model checking, these are also the first algorithms for \emph{finite-state} model checking for (Hyper)TSL. For TSL(T), we have adapted known software model checking algorithm for LTL to the setting of TSL(T). We then used the technique of self-composition to generalize this algorithm to the alternation-free fragment of HyperTSL(T). 

We have furthermore described a sound but necessarily incomplete algorithm for finding counterexamples for $\forall^*\exists^*$-HyperTSL(T) formulas (and witnesses proving $\exists^* \forall^*$ formulas). Our algorithm makes it possible to find program executions violating properties like generalized noninterference, which is only expressible by using a combination of universal and existential quantifiers.

  Finding model checking algorithms for other fragments of HyperTSL(T), and implementing our approach, remains as future work.

\bibliographystyle{plain}
\bibliography{thesis}

\newpage
\appendix 
\section{Detailed Correctness Proofs} \label{app:correctnessproof}
\subsection{Hyper Linear Temporal Logic (HyperLTL)}
Let AP be a finite set of atomic propositions and let $\Pi$ be a finite set of trace variables. Then, a {HyperLTL formula} is defined by the grammar
    \begin{align*}
        \varphi ::= a_\pi \mid \neg \varphi \mid \varphi \wedge \varphi \mid \LTLnext \varphi \mid \varphi \LTLuntil \varphi \mid \forall \pi.~ \varphi \mid \exists \pi.~ \varphi
    \end{align*}

\noindent where $a \in AP$ and $\pi \in \Pi$. 

The satisfaction of a HyperLTL formula is defined with respect to a mapping $m: \Pi \rightarrow (2^{AP})^\omega$ of trace variables to traces. For treating the quantifiers, we need the notion of extending such a mapping for a new trace variable. We define
\begin{align*}
    m[\pi \rightarrow s] (\pi) &= s \\
    m[\pi \rightarrow s] (\pi') &= m(\pi') &&\text{for } \pi \neq \pi'
\end{align*}

    The satisfaction of a HyperLTL formula with respect to a set of traces $Z \subseteq {2^{AP}}^\omega$, a mapping of trace variables to traces $m: \TraceVs \rightarrow {2^{AP}}^\omega$ and a time point $t$ is recursively defined by 
    \begin{align*}
        &Z, t, m \lmodels a_{\pi} &&\Leftrightarrow a \in m(\pi)_t \\
        &Z, t, m \lmodels \neg \varphi &&\Leftrightarrow \neg (Z, t, m \lmodels \varphi) \\
        &Z, t, m \lmodels \varphi \wedge \psi &&\Leftrightarrow Z, t, m \lmodels \varphi \wedge Z, t, m \models \psi \\
        &Z, t, m \lmodels \LTLnext \varphi &&\Leftrightarrow Z, t + 1, m \lmodels \varphi \\
        &Z, t, m \lmodels \varphi\LTLuntil\psi &&\Leftrightarrow \exists t' \geq t.~ Z, t', m \lmodels \psi \wedge \forall t \leq t'' < t'.~ Z, t'', m \lmodels \varphi \\
        &Z, t, m \lmodels \forall \pi.~\varphi &&\Leftrightarrow \forall s \in Z.~m[\pi \rightarrow s] \lmodels \varphi \\
        &Z, t, m \lmodels \exists \pi.~\varphi &&\Leftrightarrow \exists s \in Z.~m[\pi \rightarrow s] \lmodels \varphi
    \end{align*}
    We define $Z \lmodels \varphi$ as $Z, 0, \emptyset \lmodels \varphi$.

For the sepcial case in which there is only one trace quantifier, and this is a universal quantifier, we are in the fragment of LTL. 
\subsection{Similiarity of LTL and TSL}

The following lemma states an important relation between (Hyper)TSL and LTL. The LTL semantics is defined with respect to a sequence of subsets of atomic propositions, while the semantics of a TSL-formula or quantifier-free HyperTSL formula is defined with respect to a (hyper-)computation. A crucial observation for this thesis is that we can `translate' between the two -- a (hyper-)computation defines a sequence of predicate and update term subsets. For each time point, the subset contains exactly the predicate and update terms that are true now.

\begin{definition} \label{def:TSL_LTL}
    Let $\HComput \in \HAssigns^\omega, \PredSet \subseteq \HPredTerms, \USet \subseteq \HUpdTerms$. We define
    \begin{align*}
        \UPredSeq(\HComput, \PredSet, \USet)_t &= \{\HPredTerm \in \PredSet \mid t, \emptyset, \HComput \models \HPredTerm\} \cup \{\Upd{c}{\HFuncTerm{}} \in \USet \mid t, \emptyset, \HComput \models \Upd{c}{\FuncTerm{}}\} \\
        \UPredSeq(\HComput, \PredSet, \USet) &= \UPredSeq(\HComput, \PredSet, \USet)_0 ~\UPredSeq(\HComput, \PredSet, \USet)_1 ~\UPredSeq(\HComput, \PredSet, \USet)_2 \dots
    \end{align*}

    If $\PredSet$ and $\USet$ are clear from the context, we also omit these arguments.
\end{definition}

\begin{lemma} \label{lem:TSL_LTL}
    Let $t \in \mathbb{N}$. Let $\varphi$ be a HyperTSL-formula without quantifiers. Let $\PredSet \subseteq \HPredTerms, \USet \subseteq \HUpdTerms$ be the sets of predicate and update terms appearing in $\varphi$, respectively. Then
    \begin{align*}
        t, \UPredSeq(\HComput) \lmodels \varphi \Leftrightarrow t, \emptyset, \HComput \models \varphi
    \end{align*}
\end{lemma}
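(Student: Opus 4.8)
The statement is an equivalence between the LTL satisfaction of a quantifier-free HyperTSL formula $\varphi$ over the induced sequence $\UPredSeq(\HComput)$ of predicate/update-term subsets, and the native HyperTSL satisfaction over the hyper-computation $\HComput$. The plan is to prove this by structural induction on $\varphi$, for all time points $t$ simultaneously (so the induction hypothesis is universally quantified over $t$, which is needed for the temporal cases). Throughout, $\PredSet$ and $\USet$ denote the predicate and update terms occurring in $\varphi$; note that for a subformula these sets can only be smaller, but since $\UPredSeq$ is monotone in its set arguments in the sense that membership of a given term at a given time point does not depend on which other terms are tracked, I would first record the trivial observation that for any $\HPredTerm \in \PredSet$ we have $\HPredTerm \in \UPredSeq(\HComput, \PredSet, \USet)_t \iff t, \emptyset, \HComput \models \HPredTerm$, and similarly for update terms — this is immediate from Def.~\ref{def:TSL_LTL}.

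\textbf{Base cases.} For $\varphi = \HPredTerm$: by the LTL semantics $t, \UPredSeq(\HComput) \lmodels \HPredTerm$ iff $\HPredTerm \in \UPredSeq(\HComput)_t$, which by definition of $\UPredSeq$ holds iff $t, \emptyset, \HComput \models \HPredTerm$. For $\varphi = \Upd{c_\pi}{\HFuncTerm{}}$: analogously, $t, \UPredSeq(\HComput) \lmodels \Upd{c_\pi}{\HFuncTerm{}}$ iff $\Upd{c_\pi}{\HFuncTerm{}} \in \UPredSeq(\HComput)_t$ iff $t, \emptyset, \HComput \models \Upd{c_\pi}{\HFuncTerm{}}$ — here one uses that an update term, treated as an atomic proposition on the LTL side, is in the induced set exactly when the HyperTSL update semantics (comparing $\Eval(\HFuncTerm{}, \HComput_{t-1})$ with $\HComput_t(c_\pi)$) is satisfied.

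\textbf{Inductive cases.} Boolean cases $\neg\psi$, $\psi_1 \wedge \psi_2$: both semantics decompose in exactly the same way over the connectives, so these follow directly from the IH applied at the same time point $t$. Temporal cases: for $\LTLnext\psi$, both sides reduce to the statement at time $t+1$, so the IH (which is over all time points) closes it. For $\psi_1 \LTLuntil \psi_2$, both the LTL clause and the HyperTSL clause are the identical existential-over-$t'$, universal-over-$t''$ condition, so applying the IH pointwise to $\psi_1$ and $\psi_2$ at each of the relevant time points gives the equivalence. The only subtlety worth a sentence is that when passing to a subformula the relevant term sets $\PredSet', \USet'$ shrink; I would handle this once, at the start, by noting the membership test for a fixed term is insensitive to the ambient set, so $\UPredSeq(\HComput, \PredSet', \USet')_t$ agrees with $\UPredSeq(\HComput, \PredSet, \USet)_t$ on all terms of the subformula, which is all the subformula's LTL semantics can inspect.

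\textbf{Main obstacle.} There is no deep obstacle — the proof is essentially a bookkeeping induction. The one place that needs care is the handling of the term sets under the induction: making precise that the LTL semantics of a subformula $\psi$ over $\UPredSeq(\HComput,\PredSet,\USet)$ coincides with its semantics over $\UPredSeq(\HComput,\PredSet_\psi,\USet_\psi)$, where $\PredSet_\psi,\USet_\psi$ are $\psi$'s own terms, so that the induction hypothesis (stated for $\psi$ with \emph{its} term sets) actually applies. Once that compatibility lemma is stated, every case is a one-line unfolding of the two semantics side by side.
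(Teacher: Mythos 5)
Your proof is correct and follows essentially the same route as the paper's: a structural induction over $\varphi$ with the two atomic base cases ($\HPredTerm$ and $\Upd{c_\pi}{\HFuncTerm{}}$) unfolding Def.~\ref{def:TSL_LTL}, and the Boolean and temporal cases closing by the induction hypothesis quantified over all time points. The only difference is that you explicitly discharge the (harmless) issue of term sets shrinking for subformulas, which the paper sidesteps by fixing $\PredSet$ and $\USet$ to the terms of the whole formula throughout; this is a minor added care, not a different argument.
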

\begin{proof} (Lemma \ref{lem:TSL_LTL})
    Proof by structural induction over $\varphi$.
    \begin{itemize}
        \item Case $\varphi = \HPredTerm$
        \begin{align*}
            t, \UPredSeq(\HComput) \lmodels \HPredTerm \Leftrightarrow \HPredTerm \in \UPredSeq(\HComput)_t \Leftrightarrow t, \emptyset, \HComput \models \HPredTerm
        \end{align*}
        \item Case $\varphi = \Upd{c_\pi}{\HFuncTerm{}}$
        \begin{align*}
            t, \UPredSeq(\HComput) \lmodels \Upd{c_\pi}{\HFuncTerm{}} \Leftrightarrow \HPredTerm \in \UPredSeq(\HComput)_t \Leftrightarrow t, \emptyset, \HComput \models \Upd{c_\pi}{\HFuncTerm{}}
        \end{align*}
        \item Case $\varphi = \neg \psi$
        \begin{align*}
            t, \UPredSeq(\HComput) \lmodels \neg \psi \Leftrightarrow \neg(t, \UPredSeq(\HComput) \lmodels \psi) \Leftrightarrow \neg(t, \emptyset, \HComput \models \psi) \Leftrightarrow t, \emptyset, \HComput \models \neg \psi
        \end{align*}
        \item Case $\varphi = \psi \wedge \psi'$
        \begin{align*}
            & &&t, \UPredSeq(\HComput) \lmodels \psi \wedge \psi' \\
            &\Leftrightarrow &&t, \UPredSeq(\HComput) \lmodels \psi \wedge t, \UPredSeq(\HComput) \lmodels \psi' \\
            &\Leftrightarrow &&t, \emptyset, \HComput \models \psi \wedge t, \emptyset, \HComput \models \psi' \\
            &\Leftrightarrow &&t, \emptyset, \HComput \models \psi \wedge \psi'
        \end{align*}
        \item Case $\varphi = \LTLnext \psi$
        \begin{align*}
            t, \UPredSeq(\HComput) \lmodels \LTLnext \psi \Leftrightarrow t+1, \UPredSeq(\HComput) \lmodels \psi \Leftrightarrow t+1, \emptyset, \HComput \models \psi \Leftrightarrow t, \emptyset, \HComput \models \LTLnext \psi
        \end{align*}
        \item Case $\varphi = \psi \LTLuntil \psi'$
        \begin{align*}
            & &&t, \UPredSeq(\HComput) \lmodels \psi \LTLuntil \psi'\\
            & \Leftrightarrow &&\exists t' \geq t.~ t', \UPredSeq(\HComput) \lmodels \psi' \wedge \forall t \leq t'' < t'.~t'', \UPredSeq(\HComput) \lmodels \psi \\
            &\Leftrightarrow &&\exists t' \geq t.~ t', Z, \HComput \models \psi' \wedge \forall t \leq t'' < t'.~t'', Z, \HComput \models \psi \\
            &\Leftrightarrow &&t, \emptyset, \HComput \models \psi \LTLuntil \psi'
        \end{align*}       
    \end{itemize} 
    \vspace{-0.5cm}
\end{proof}

\subsection{Proof of Theorem \ref{thm:k_feasible}} \label{sec:k_feasible_proof}
\begin{proof}
    $\Rightarrow$
    Let $q_0, q_1, q_2 \dots \in Q^\omega$ be a run of $P$ on the $k$-feasible trace $\PTrace$. Then, for every $j \in \mathbb{N}$, $$e_j = ((q_j,\PTrace_{j},q_{j+1} \dots ,\PTrace_{j+k-2},q_{j+k-1}), \PTrace_{j+k-1}, (q_{j+1},\PTrace_{j+1}, \dots, q_{j+k-1},\PTrace_{j+k-1},q_{j+k}))$$ is a transition of $P_k$. Moreover, for every $k' < k$,
    $$ e_{k'} = ((q_0,s_0,q_1\dots ,\PTrace_{k'-1},q_{k'}), \PTrace_{k'}, (q_0, \PTrace_0, \dots ,q_{k'},\PTrace_{k'},q_{k'+1})) $$
    is also a transition of $P_k$. Thus, $q_0, q_1, \dots$ is accepted by $P_k$.

    $\Leftarrow$ 
    Let $\PTrace$ be a trace of $P$ accepted by $P_k$. Then, there exist states of $P$ $q_0, q_1 \dots$ such that for every $j \in \mathbb{N}$, $e_j$ from above is a transition of $P_k$. Thus, by the definition of $P_k$ for every $j$, $\PTrace_j \dots \PTrace_{j+k-1}$ is feasible. Thus, $\PTrace$ is $k$-feasible.
    \qed 
\end{proof}

\subsection{Proof of Theorem \ref{thm:BuechiProdModelChecking}} \label{sec:buechi_corr}

The main idea of the correctness proof is a construction that, given a computation $\Comput$ that matches a program trace $\PTrace$, constructs a computation matching the combined trace $\combine{\PTrace}{\UPredSeq(\Comput)}$ and vice versa ($\UPredSeq$ was defined in Definition \ref{def:TSL_LTL}). This gives us the necessary feasibility proofs. To do so, we define two operations, $\widetilde{(-)}$ and $(-)_{|\PTrace}$ that `nearly' invert each other: we have that $(\widetilde{\Comput})_{|\PTrace} = \Comput$ and if $\matches{\Comput}{\combine{\PTrace}{X}}$ for some $X$, we also have that $\widetilde{\Comput_{|\PTrace}} = \Comput$. In Lemma \ref{lem:corr1} we show that if $\matches{\Comput}{\combine{\PTrace}{X}}$ for some $X$, then $\matches{\Comput_{|\PTrace}}{\PTrace}$. In Lemma \ref{lem:corr2} we show that then, we also have that $X=\UPredSeq(\Comput_{|\PTrace})$. Lemma \ref{lem:corr3} states the other direction: if $\matches{\Comput}{\PTrace}$, then also $\matches{\widetilde{\Comput}}{\combine{\PTrace}{\UPredSeq(\Comput)}}$. Those three lemmata give us the feasibility proofs needed for the algorithm's correctness. Lemma \ref{lem:TSL_LTL} then gives the equivalence between the violation of the TSL-formula by $\Comput$ and the sequence $\UPredSeq(\Comput)$ being accepted by $A_{\neg \varphi}$, needed for reasoning about the existence of a trace $\combine{\PTrace}{\UPredSeq(\Comput)}$ in the Büchi program product.

We start with definining the operation $\widetilde{(-)}$. Let $\PTrace \in \Stmt^\omega$ and $\matches{\Comput}{\PTrace}$. We need to extend this computation to one that matches $\combine{\PTrace}{\UPredSeq(\Comput)}$. For every time point $t$, we need to introduce computation steps that match $\combine{\PTrace_t}{\UPredSeq(\Comput)_t} =$ \\
$ \mathit{save\_values}_{{\UPredSeq(\Comput)_t}};~ \PTrace_t;~\mathit{new\_inputs};~\mathit{check\_preds}_{{\UPredSeq(\Comput)_t}};~ \mathit{check\_updates}_{{\UPredSeq(\Comput)_t}}$. While executing $\mathit{save\_values}_{{\UPredSeq(\Comput)_t}}$, the values of the temporary variables are changed as required by the statements $tmp_j := \FuncTerm{j}$. When the actual statement $\PTrace_t$ is executed, the computation changes to $\Comput_t$, but still with the `old' input values and extended with values for the temporal variables. Next, when executing $\mathit{new\_inputs}$, we stepwise change the input values to those in $\Comput_t$.  Then, the assertions are executed and the computation cannot change anymore. 

In the following, we also need the notion of extending an assignment: we define $a[c \mapsto v](c) = v$ and $a[c \mapsto v](c') = a(c')$ for $c \neq c'$.

Let $\USet \subseteq \UpdTerms$ be in the following the set of update terms, and $\PredSet \subseteq \PredTerms$ the set of predicate terms appearing in the formula $\varphi$. 

\begin{definition} \label{def:adaptedComput}
    Let $\Inputs = \{i_1, \dots i_n\}$ be the set of inputs and \\ $\USet = \{\Upd{c_1}{\FuncTerm{1}}, \dots ,\Upd{c_m}{\FuncTerm{m}}\}$. Given a computation $\Comput$, we define the \textbf{adapted computation} $\widetilde{\Comput}$ as follows.
    \begin{align*}
        \Assign^{\mathit{tmp}_1}_t &:= \Comput_{t-1} [\mathit{tmp}_1 \mapsto \Eval(\FuncTerm{1}, \Comput_{t-1})] \\
        \Assign^{\mathit{tmp}_j}_t &:= \Assign^{\mathit{tmp}_{j-1}} [\mathit{tmp}_j \mapsto \Eval(\FuncTerm{j}, \Comput_{t-1})] &&\text{for } 1 < j \leq m\\
        \Assign_t &:= \Comput_t[\mathit{tmp}_1 \mapsto \Eval(\FuncTerm{1}, \Comput_{t-1}), \dots , \mathit{tmp}_m \mapsto \Eval(\FuncTerm{m}, \Comput_{t-1}),\\
        &~~~~~~~~i_1 \mapsto \Comput_{t-1}(i_1),~ \dots,~i_n \mapsto \Comput_{t-1}(i_n) ] \\
        \Assign^{i_1}_t &:= \Assign_t [i_1 \mapsto \Comput_t(i_1)] \\
        \Assign^{i_j}_t &:= \Assign^{i_{j-1}} [i_j \mapsto \Comput_t(i_j)] &&\text{for } 1 < j \leq n\\
        \widetilde{\Comput_t} &:= a^{\mathit{tmp}_1}_t \dots a^{\mathit{tmp}_m}_t~a_t~a^{i_1}_t~\dots~a^{i_n}_t~a^{i_n}_t~a^{i_n}_t \\
        \widetilde{\Comput} &:= \widetilde{\Comput_0}~ \widetilde{\Comput_1}\dots 
    \end{align*}
\end{definition}

Note that this is the only possibility to adapt a computation $\matches{\Comput}{\sigma}$ such that the result \textit{could} match $\combine{\sigma}{X}$ for any $X$.

Also note that $a_t^{i_n} = \Comput_t [tmp_1 \mapsto \Eval(\FuncTerm{1}, \Comput_{t-1}), \dots , tmp_m \mapsto \Eval(\FuncTerm{m}, \Comput_{t-1})]$.

We can also define the left inverse of this operation: reducing a computation that matches $\combine{\PTrace}{X}$ to a computation that matches $\PTrace$.

\begin{definition}
    Let $\PTrace \in \Stmt^\omega, X \in \mathcal{P}(\PredTerms \cup \UpdTerms)^\omega$ and $\matches{\Comput}{\combine{\PTrace}{X}}$. We define the \textbf{reduced computation} $\Comput_{|\sigma}$ as follows.
    \begin{align*}
        \iota(j) &:= (|\Inputs| + |\USet| + 3) \cdot (j+1) -3 \\
        \Comput_{|\PTrace}(U) &:= (\Comput_{\iota(0)})_{| (\Inputs \cup \Cells)}~(\Comput_{\iota(1)})_{| (\Inputs \cup \Cells)} \dots
    \end{align*}
    where $a_{| (\Inputs \cup \Cells)}$ means restricting the domain of the assignment to the original inputs and cells, thus excluding the temporal variables $tmp_1, tmp_2 \dots$
\end{definition}

Note that if $\matches{\Comput}{\combine{\PTrace}{X}}$, we also have that $\Comput = \widetilde{\Comput_{|\PTrace}}$, as this is the \textit{only} computation that could potentially match $\combine{\PTrace}{X}$ and equals $\Comput_{|\PTrace}$ when restricted to~$\PTrace$. 

\begin{lemma} \label{lem:corr1}
    If $\matches{\Comput}{\combine{\PTrace}{X}}$, then $\matches{\Comput_{|\PTrace}}{\sigma}$.
\end{lemma}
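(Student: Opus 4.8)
The plan is to unfold $\matches{\Comput}{\combine{\PTrace}{X}}$ as $\matches{\Comput}{\flatten{\combine{\PTrace}{X}}}$ and then do careful bookkeeping over the block structure of the flattened trace. After flattening, $\combine{\PTrace}{X}$ is a sequence of basic statements grouped into consecutive blocks of uniform length $B := |\Inputs| + |\USet| + 3$, one block per time step $j$: first the $|\USet|$ assignments $\mathit{tmp}_1 := \FuncTerm{1}, \dots$ saving the update-term values, then the basic statement $\PTrace_j$, then the $|\Inputs|$ havoc statements $i_1 := {*}, \dots$, then the two assertions $\mathit{check\_preds}_{X_j}$ and $\mathit{check\_updates}_{X_j}$. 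This uniform block length is exactly what the index function $\iota$ presupposes, and it is also why $\PTrace$ must range over basic statements $\BStmt$ for the claim to make sense. Fixing $j \in \mathbb{N}$, I would locate the relevant global positions inside the flattened trace: $\PTrace_j$ sits at index $jB + |\USet|$, so (using the ``state before $\leadsto$ state after'' reading of Def.~\ref{def:matches}) $\Comput_{jB+|\USet|}$ is the state just after $\PTrace_j$ and $\Comput_{jB+|\USet|-1}$ the state just before it; moreover $\iota(j) = jB + |\USet| + |\Inputs|$ is the index of the last havoc statement in block $j$, and $\iota(j-1) = jB - 3$, so $(\Comput_{|\PTrace})_j = (\Comput_{\iota(j)})_{|(\Inputs \cup \Cells)}$ and $(\Comput_{|\PTrace})_{j-1} = (\Comput_{jB-3})_{|(\Inputs \cup \Cells)}$.

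The heart of the argument is three invariance observations, each a direct instance of Def.~\ref{def:matches} applied to the auxiliary statements of the block: (i)~the two assertions ending block $j-1$ change nothing in $\Cells^*$; (ii)~the $\mathit{tmp}$-assignments starting block $j$ change only the $\mathit{tmp}$ cells, hence fix $\Inputs \cup \Cells$; (iii)~the havoc statements sitting between $\PTrace_j$ and the assertions change only inputs, hence fix $\Cells$ together with the $\mathit{tmp}$ cells. Chaining (i) and (ii) yields that $\Comput_{jB+|\USet|-1}$ agrees with $(\Comput_{|\PTrace})_{j-1}$ on $\Inputs \cup \Cells$; since every predicate term and function term appearing in a program statement mentions only original cells and inputs (never the $\mathit{tmp}$ cells), evaluation of such a term via $\Eval$ is unaffected by the $\mathit{tmp}$-coordinates and hence gives the same value on $\Comput_{jB+|\USet|-1}$ and on $(\Comput_{|\PTrace})_{j-1}$. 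Observation (iii) yields that $\Comput_{jB+|\USet|}$ and $\Comput_{\iota(j)}$ agree on $\Cells$, so $(\Comput_{|\PTrace})_j$ restricted to $\Cells$ equals $\Comput_{jB+|\USet|}$ restricted to $\Cells$.

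With these agreements in hand I would conclude by a case split on the shape of $\PTrace_j$ along the three forms of Def.~\ref{def:grammar}. In each case, the matching condition that Def.~\ref{def:matches} imposes on the flattened trace at position $jB+|\USet|$ --- relating $\Comput_{jB+|\USet|-1}$ and $\Comput_{jB+|\USet|}$ to $\PTrace_j$ --- translates, via the three agreements, verbatim into $\matchest{\Comput_{|\PTrace}}{\PTrace}{j}$: for instance for $\PTrace_j = c := \FuncTerm{}$, the equation $\Eval(\FuncTerm{},\Comput_{jB+|\USet|-1}) = \Comput_{jB+|\USet|}(c)$ becomes $\Eval(\FuncTerm{},(\Comput_{|\PTrace})_{j-1}) = (\Comput_{|\PTrace})_j(c)$, while ``every other cell unchanged'' drops out of (i)--(iii); the cases $\PTrace_j = \mathit{assert}(\tau_P)$ and $\PTrace_j = c := {*}$ are analogous. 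The base case $j = 0$ is identical, except that observation (i) is replaced by the remark that the flattened trace and $\Comput_{|\PTrace}$ share the same fixed initial assignment $\Comput_{-1}$ (restricted to $\Inputs \cup \Cells$).

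I expect the only real obstacle to be the index arithmetic: keeping the block boundaries, the ``before/after'' convention of Def.~\ref{def:matches}, and the offset hidden in $\iota$ mutually consistent, plus checking that restricting an assignment to $\Inputs \cup \Cells$ commutes with evaluating the terms that occur in program statements. Once that is pinned down, the remaining content is a routine case check.
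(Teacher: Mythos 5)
Your proof is correct and takes essentially the same route as the paper's: both locate the position of $\PTrace_t$ inside the flattened combined trace via the block length $|\Inputs|+|\USet|+3$ and the index $\iota$, observe that the auxiliary $\mathit{tmp}$-assignments, input havocs and assertions leave the original cells (respectively cells and inputs) unchanged, and then translate the matching condition at that position into $\matchest{\Comput_{|\PTrace}}{\PTrace}{t}$ by a case split on the shape of $\PTrace_t$. The only cosmetic difference is that the paper reads these invariance facts off the explicit formula for the adapted computation $\widetilde{\Comput_{|\PTrace}}$ (using the remark that $\Comput=\widetilde{\Comput_{|\PTrace}}$), whereas you derive them directly from Def.~\ref{def:matches} applied to each auxiliary statement, which is if anything slightly more self-contained.
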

\begin{proof}
    We have to show that $\forall t \in \mathbb{N}.~\matchest{\Comput_{|\PTrace}}{\PTrace}{t}$

    Recall that $\Comput = \widetilde{\Comput_{|\PTrace}}$ and
    \begin{align*}
        \widetilde{(\Comput_{|\PTrace})_t} = a^{tmp_1}_t \dots a^{tmp_m}_t~a_t~a^{i_1}_t~\dots~a^{i_n}_t~a^{i_n}_t~a^{i_n}_t
    \end{align*}

    \begin{itemize}
        \item Case $\PTrace_t = \mathit{assert}(\PredTerm)$ \\
        We know that $\matchest{\Comput}{\combine{\PTrace}{X}}{\iota(t)-|\Inputs|}$. The corresponding statement is $\sigma_j$, thus
        \begin{align*}
            \Eval(\PredTerm, \Comput_{\iota(t)-|\Inputs|-1}) = true ~~~\wedge \forall c \in \Cells^*.~\Comput_{\iota(t)-|\Inputs|}(c) = \Comput_{\iota(t)-|\Inputs|-1}(c)
        \end{align*} 
        Moreover, $(\Comput_{\iota(t)-|\Inputs|-1}) = a_t^{tmp_m}$. This equals $(\Comput_{|\PTrace})_{t-1}$ extended with values for the temporary variables. As $\PredTerm$ does not contain the temporal variables, this means that $\Eval(\PredTerm, ((\Comput_{|\PTrace})_{t-1})_{|(\Inputs \cup \Cells)})$ is also true. It remains to show that             
        \begin{align*}
            \forall c \in \Cells.~(\Comput_{\iota(t-1)})_{|(\Inputs \cup \Cells)} (c) = (\Comput_{\iota(t)})_{|(\Inputs \cup \Cells)} (c)
        \end{align*}
        This is true as the only cells changed in $\Comput_{\iota(t-1)} \dots \Comput_{\iota(t)-|\Inputs|-1}$ and in $\Comput_{\iota(t) - |\Inputs|},\dots \Comput_{\iota(t)}$ are cells from $\Cells^* \backslash \Cells$.
        \item The two remaining cases are analogous.
    \end{itemize}
    \vspace{-0.5cm}
\end{proof}

\begin{lemma} \label{lem:corr2}
    If $\matches{\Comput}{combine(\PTrace, X)}$, then $X = \UPredSeq(\Comput_{|\PTrace})$.
\end{lemma}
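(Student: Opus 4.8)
The plan is to show, for each time point $t$, that $\UPredSeq(\Comput_{|\PTrace})_t = X_t$, by unfolding the definition of $\UPredSeq$ (Def.~\ref{def:TSL_LTL}) on the left and using the fact that $\matches{\Comput}{\combine{\PTrace}{X}}$ forces the assertions $\mathit{check\_preds}_{X_t}$ and $\mathit{check\_updates}_{X_t}$ to succeed on the right. By Lemma~\ref{lem:corr1} we already know $\matches{\Comput_{|\PTrace}}{\PTrace}$, and we also know $\Comput = \widetilde{\Comput_{|\PTrace}}$, so the intermediate assignments $a^{\mathit{tmp}_1}_t,\dots,a^{\mathit{tmp}_m}_t$ appearing in $\widetilde{(\Comput_{|\PTrace})_t}$ have exactly the shape prescribed by Def.~\ref{def:adaptedComput}, in particular $\mathit{tmp}_j$ holds $\Eval(\FuncTerm{j}, (\Comput_{|\PTrace})_{t-1})$ from the point $\mathit{save\_values}$ finishes onward.

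I would split the verification into the predicate part and the update part. For predicates: fix $\PredTerm \in \PredSet$. Since $\matches{\Comput}{\combine{\PTrace}{X}}$, the statement $\mathit{check\_preds}_{X_t} = \mathit{assert}\big(\bigwedge_{\PredTerm \in X_t}\PredTerm \wedge \bigwedge_{\PredTerm \in \PredSet\setminus X_t}\neg\PredTerm\big)$ succeeds, and it is evaluated in the assignment reached just before it (after $\mathit{new\_inputs}$), which restricted to $\Inputs \cup \Cells$ is exactly $(\Comput_{|\PTrace})_t$ — here I use that the inputs have by then been updated to their $\Comput_t$-values, and that $\PredTerm$ does not mention the $\mathit{tmp}_j$. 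Hence $\Eval(\PredTerm, (\Comput_{|\PTrace})_t) = \mathit{true}$ iff $\PredTerm \in X_t$, which is precisely $\PredTerm \in \UPredSeq(\Comput_{|\PTrace})_t$ by the TSL(T) semantics of a predicate term and Def.~\ref{def:TSL_LTL}. For updates: fix $\Upd{c_j}{\FuncTerm{j}} \in \USet$. The assertion $\mathit{check\_updates}_{X_t}$ succeeds, and in the assignment where it is evaluated $c_j$ holds $(\Comput_{|\PTrace})_t(c_j)$ while $\mathit{tmp}_j$ still holds $\Eval(\FuncTerm{j},(\Comput_{|\PTrace})_{t-1})$ (neither was touched by $\mathit{new\_inputs}$ nor by the assertions). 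The conjunct for $j$ is $c_j = \mathit{tmp}_j$ if $\Upd{c_j}{\FuncTerm{j}} \in X_t$ and $c_j \neq \mathit{tmp}_j$ otherwise; so $\Eval(\FuncTerm{j},(\Comput_{|\PTrace})_{t-1}) = (\Comput_{|\PTrace})_t(c_j)$ iff $\Upd{c_j}{\FuncTerm{j}} \in X_t$, which by the semantics of update terms and Def.~\ref{def:TSL_LTL} is exactly $\Upd{c_j}{\FuncTerm{j}} \in \UPredSeq(\Comput_{|\PTrace})_t$.

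Combining the two parts gives $\UPredSeq(\Comput_{|\PTrace})_t = X_t$ for every $t$, hence $\UPredSeq(\Comput_{|\PTrace}) = X$. The main obstacle is the bookkeeping about \emph{which} assignment in the long expansion $\widetilde{(\Comput_{|\PTrace})_t}$ is live when each assertion fires, and arguing that the values of $\mathit{tmp}_j$, of the cells in $\Cells$, and of the inputs are the intended ones at that moment (in particular that $\mathit{save\_values}$ runs \emph{before} $\PTrace_t$ and hence captures the previous-step values, while $\mathit{new\_inputs}$ runs \emph{after} $\PTrace_t$ so the inputs are current by the time the predicate assertion is checked); this is the same index-chasing as in Lemma~\ref{lem:corr1} via the function $\iota$, and I would reuse that machinery rather than redo it.
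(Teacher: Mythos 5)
Your proposal is correct and follows essentially the same route as the paper's proof: pointwise over $t$, split into the $\mathit{check\_preds}_{X_t}$ and $\mathit{check\_updates}_{X_t}$ assertions, use $\Comput = \widetilde{\Comput_{|\PTrace}}$ to pin down that $\mathit{tmp}_j$ carries $\Eval(\FuncTerm{j},(\Comput_{|\PTrace})_{t-1})$ at the moment the update assertion fires, and identify the live assignment at $\iota(t)$ with $(\Comput_{|\PTrace})_t$ after restricting away the temporary variables. The index bookkeeping you defer to the $\iota$-machinery of Lemma~\ref{lem:corr1} is exactly what the paper does as well.
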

\begin{proof}
    We prove $\forall t.~ X_t = \UPredSeq(\Comput_{|\PTrace})_t$.
    We know that $\matchest{\Comput}{\combine{\PTrace}{X}}{\iota(t)+1}$. The corresponding statement is $\mathit{check\_preds}_{X_t}$. Set $h = \left(\bigwedge_{\PredTerm \in {X_t}} \PredTerm \wedge \bigwedge_{\PredTerm \in \PredSet \backslash {X_t}} \neg \PredTerm \right)$. This means that 
    \begin{align*}
        \Eval(h, \Comput_{\iota(t)+1}) = true ~~~\wedge \forall c \in \Cells^*.~\Comput_{\iota(t)+1}(c) = \Comput_{\iota(t)}(c)
    \end{align*}
    This implies that $true = \Eval((h, \Comput_{\iota(t)})_{|(\Inputs \cup \Cells)}) = \Eval(h, (\Comput_{|\PTrace})_t)$. Therefore, for all $\PredTerm \in \PredSet$
    \begin{align*}
        \PredTerm \in \UPredSeq(\Comput_{|\PTrace})_t \Leftrightarrow t, \Comput_{|\PTrace} \models \PredTerm \Leftrightarrow \Eval(\PredTerm, \Comput_{\iota(t)}) = true \Leftrightarrow \PredTerm \in {X_t}
    \end{align*} 

    For the update terms, we know that $\matchest{\Comput}{\combine{\PTrace}{X}}{\iota(t)+2}$. The corresponding statement is $\mathit{check\_updates}_{X_t}.$ Set $h=\left( \bigwedge_{\Upd{c_j}{\FuncTerm{j}} \in \USet}
    \begin{cases}
        c_j = tmp_j &\text{if } \Upd{c_j}{\FuncTerm{j}} \in X_t\\
        c_j \neq tmp_j &\text{else} 
    \end{cases}        
        \right)$ As before, we know that $\Eval(h, (\Comput_{|\PTrace})_t) = true$. Moreover, we know that for each $j$, $(\Comput_{|\PTrace})_t (tmp_j) = \Eval(\FuncTerm{j}, (\Comput_{|\PTrace})_{t-1})$ by definition \ref{def:adaptedComput} Therefore, for every $\Upd{c_j}{\FuncTerm{j}} \in \USet,$
        \begin{align*}
            \Upd{c_j}{\FuncTerm{j}} \in \UPredSeq(\Comput_{|\PTrace})_t &\Leftrightarrow t, \Comput_{|\PTrace} \models \Upd{c_j}{\FuncTerm{j}} \\ 
            &\Leftrightarrow \Eval({\FuncTerm{j}}, \Comput_{\iota(t-1)}) = \Eval(c_j, \Comput_{\iota(t)}) \\
            &\Leftrightarrow \Eval(c_j = tmp_j, \Comput_{\iota(t)}) = true \\
            &\Leftrightarrow \Upd{c_j}{\FuncTerm{j}} \in X_t
        \end{align*}
\end{proof}

\begin{lemma} \label{lem:corr3}
    If $\matches{\Comput}{\PTrace}$, then $\matches{\widetilde{\Comput}}{\combine{\PTrace}{\UPredSeq(\Comput)}}$
\end{lemma}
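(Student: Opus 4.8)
The plan is to unfold both sides down to basic statements and verify the matching relation of Def.~\ref{def:matches} position by position. First I would spell out $\flatten{\combine{\PTrace}{\UPredSeq(\Comput)}}$ as the concatenation, over $t\in\mathbb{N}$, of length-$L$ blocks $B_t$, where $L=|\USet|+|\Inputs|+3$ (the constant underlying the index map $\iota$): $B_t$ lists $\mathit{tmp}_1:=\FuncTerm{1},\ \dots,\ \mathit{tmp}_m:=\FuncTerm{m}$, then $\PTrace_t$, then $i_1:=*,\ \dots,\ i_n:=*$, then $\mathit{assert}(\mathit{check\_preds}_{\UPredSeq(\Comput)_t})$, then $\mathit{assert}(\mathit{check\_updates}_{\UPredSeq(\Comput)_t})$. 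By Def.~\ref{def:adaptedComput}, $\widetilde{\Comput}$ is itself a concatenation of length-$L$ blocks $\widetilde{\Comput_t}=a^{\mathit{tmp}_1}_t\cdots a^{\mathit{tmp}_m}_t\,a_t\,a^{i_1}_t\cdots a^{i_n}_t\,a^{i_n}_t\,a^{i_n}_t$, so the $r$-th assignment of block $t$ of $\widetilde{\Comput}$ is exactly the state in which the $r$-th statement of $B_t$ should leave the machine. Hence it suffices to check, for each $t$ and each offset $0\le r<L$, the matching clause for that one statement against the pair of consecutive assignments of $\widetilde{\Comput}$ straddling position $tL+r$; at $t=0,r=0$ the ``previous'' assignment is the initial assignment $\Comput_{-1}$.

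The single observation that makes every case routine is that no term occurring in $\varphi$ -- in particular none of the $\FuncTerm{j}$ and no predicate term in $\PredSet$ -- mentions any of the fresh cells $\mathit{tmp}_j$; hence for every assignment $a$ appearing inside $\widetilde{\Comput_t}$, the values $\Eval(\FuncTerm{j},a)$ and $\Eval(\PredTerm,a)$ depend only on $a$ restricted to $\Inputs\cup\Cells$, and by Def.~\ref{def:adaptedComput} that restriction is either $\Comput_{t-1}$ (for the $a^{\mathit{tmp}_\bullet}_t$, and on inputs for $a_t$) or $\Comput_t$ (for the $a^{i_\bullet}_t$, and on $\Cells$ for $a_t$). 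I would state this once and reuse it.

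Then I would run the five cases. For $\mathit{tmp}_j:=\FuncTerm{j}$: the before/after pair differs exactly in $\mathit{tmp}_j$, whose new value $\Eval(\FuncTerm{j},\Comput_{t-1})$ equals $\Eval(\FuncTerm{j},\text{before})$ by the observation -- this is the assignment clause of Def.~\ref{def:matches}. For $\PTrace_t$: the pair is $(a^{\mathit{tmp}_m}_t,a_t)$, which agree on $\Inputs$ and on all $\mathit{tmp}_j$ and whose only change on $\Cells$ is the change from $\Comput_{t-1}$ to $\Comput_t$, so the relevant clause of the assumed $\matchest{\Comput}{\PTrace}{t}$ transfers verbatim (using the observation to evaluate any $\FuncTerm{}$ or $\PredTerm$ inside $\PTrace_t$ on $\Comput_{t-1}$). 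For $i_j:=*$: the pair differs only in $i_j$, which is exactly the clause for $c:=*$. For $\mathit{assert}(\mathit{check\_preds})$ the pair is $(a^{i_n}_t,a^{i_n}_t)$ -- no cell changes, as required -- and since that assignment restricts to $\Comput_t$, the observation together with Def.~\ref{def:TSL_LTL} shows that the predicate terms true there are precisely those in $\UPredSeq(\Comput)_t$, so the asserted conjunction $\bigwedge_{\PredTerm\in\UPredSeq(\Comput)_t}\PredTerm\wedge\bigwedge_{\PredTerm\in\PredSet\setminus\UPredSeq(\Comput)_t}\neg\PredTerm$ holds. For $\mathit{assert}(\mathit{check\_updates})$ the pair is again $(a^{i_n}_t,a^{i_n}_t)$, and using $a^{i_n}_t(\mathit{tmp}_j)=\Eval(\FuncTerm{j},\Comput_{t-1})$ and $a^{i_n}_t(c_j)=\Comput_t(c_j)$ we get $a^{i_n}_t(c_j)=a^{i_n}_t(\mathit{tmp}_j)$ iff $t,\Comput\models\Upd{c_j}{\FuncTerm{j}}$ iff $\Upd{c_j}{\FuncTerm{j}}\in\UPredSeq(\Comput)_t$ by Def.~\ref{def:TSL_LTL}, which is exactly what the asserted case split demands.

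I expect the only real friction to be the bookkeeping at the seams: that the ``previous'' assignment for offset $0$ of block $t$ is the last copy $a^{i_n}_{t-1}$ of block $t-1$ (and $\Comput_{-1}$ when $t=0$), and being explicit about the harmless convention for the values of the $\mathit{tmp}_j$ on assignments that precede their first write -- those cells lie in $\Cells^*\setminus(\Inputs\cup\Cells)$ and are read by no term of $\varphi$, so any choice works and the ``all other cells unchanged'' conjuncts go through regardless. I would pin this down once using the index map $\iota$ already introduced; everything else is a direct unfolding of Def.~\ref{def:adaptedComput}, Def.~\ref{def:matches}, the definition of $\combine{\cdot}{\cdot}$, and Def.~\ref{def:TSL_LTL}.
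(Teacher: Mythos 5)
Your proposal is correct and follows essentially the same route as the paper's proof: verify $\matchest{\widetilde{\Comput}}{\combine{\PTrace}{\UPredSeq(\Comput)}}{t}$ position by position, with the substance concentrated in the $\mathit{check\_preds}$ and $\mathit{check\_updates}$ assertions, discharged exactly as you do via the facts that terms of $\varphi$ never mention the $\mathit{tmp}_j$ and that $a^{i_n}_t(\mathit{tmp}_j)=\Eval(\FuncTerm{j},\Comput_{t-1})$, reducing everything to the definition of $\UPredSeq(\Comput)_t$. The only difference is that you spell out the non-assertion cases and the block/seam bookkeeping that the paper dismisses as clear by the definition of $\widetilde{\Comput}$.
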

\begin{proof}
    We have to show that for all $t$, $\matchest{\widetilde{\Comput}}{\combine{\PTrace}{\UPredSeq(\Comput)}}{t}$. This is clear for all time steps except for those of kind $\mathit{check\_preds}$ or $\mathit{check\_updates}$ by the definition of $\widetilde{\Comput}$.

    First consider $\mathit{check\_preds}$. We need to show that $\forall t$, $\matchest{\widetilde{\Comput}}{\combine{\PTrace}{\UPredSeq(\Comput)}}{\iota(t)+1}$. This boils down to
    \begin{align*}
        \Eval\left( \left(\bigwedge_{\PredTerm \in {\UPredSeq(\Comput)_t}} \PredTerm \wedge \bigwedge_{\PredTerm \in \PredSet \backslash {\UPredSeq(\Comput)_t}} \neg \PredTerm \right), \widetilde{\Comput}_{\iota(t)} \right) = true
    \end{align*}
    As the temporary variables $\mathit{tmp}_1, \mathit{tmp}_2 \dots $ are not used in any $\PredTerm \in \PredSet$, this is by definition \ref{def:adaptedComput} equivalent to
    \begin{align*}
        \forall \PredTerm \in \PredSet.~\PredTerm \in \UPredSeq(\Comput)_t \Leftrightarrow \Eval(\PredTerm, \Comput_t) = true
    \end{align*}
    This is true by the definition of $\UPredSeq(\Comput)_t$.

    Now consider $\mathit{check\_updates}$. We need to show that $\forall t$, $\matchest{\widetilde{\Comput}}{\combine{\PTrace}{\UPredSeq(\Comput)}}{\iota(t)+2}$. This boils down to
    \begin{align*}
        \Eval\left(\left( \bigwedge_{\Upd{c_j}{\FuncTerm{j}} \in \USet}
        \begin{cases}
            c_j = \mathit{tmp}_j &\text{if } \Upd{c_j}{\FuncTerm{j}} \in \UPredSeq(\Comput)_t\\
            c_j \neq \mathit{tmp}_j &\text{else} 
        \end{cases}        
            \right), \widetilde{\Comput}_{\iota(t)} \right) = true
    \end{align*}
    Which is equivalent to 
    \begin{align*}
        \forall \Upd{c_j}{\FuncTerm{j}} \in \USet.~ \Eval(c_j = \mathit{tmp}_j, \widetilde{\Comput}_{\iota(t)}) = true \Leftrightarrow \Upd{c_j}{\FuncTerm{j}} \in \UPredSeq(\Comput)_t
    \end{align*}
    We know that $\widetilde{\Comput}_{\iota(t)}(\mathit{tmp}_j) = \Eval(\FuncTerm{j}, \Comput_{t-1})$. Thus this is equivalent to 
    \begin{align*}
        \forall \Upd{c_j}{\FuncTerm{j}} \in \USet.~ \Eval(\FuncTerm{j}, \Comput_{t-1}) = \Comput_{t}(c) \Leftrightarrow \Upd{c_j}{\FuncTerm{j}} \in \UPredSeq(\Comput)_t
    \end{align*}
    which is again true by the definition of $\UPredSeq(\Comput)_t$.
\end{proof}

Now, we have all the lemmas needed to prove Theorem \ref{thm:BuechiProdModelChecking}
\begin{proof} (Theorem \ref{thm:BuechiProdModelChecking}) \\
    $\Rightarrow$
    Assume that $P \BuechiProd A_{\neg \varphi}$ has a feasible trace. Then, this is a trace $\combine{\PTrace}{X}$ for some $\PTrace \in \mathcal{L}(P)$ and $X \in \mathcal{L}(A_{\neg \varphi})$. Moreover, $\matches{\Comput}{\combine{\PTrace}{X}}$ for some $\Comput \in \Assigns^\omega$. By Lemma \ref{lem:corr2}, we know that $X=\UPredSeq(\Comput)$ and by Lemma \ref{lem:corr1} we know that $\matches{\Comput}{\PTrace}$. By the correctness of $A_\varphi$, we know that $\UPredSeq(\Comput) \lmodels \neg \varphi$, which by Lemma \ref{lem:TSL_LTL} means that $\Comput \models \neg \varphi$. Thus $\Comput$ is a counterexample that proves that $P$ does not satisfy $\varphi$.

    $\Leftarrow$
    Assume that $P$ does not satisfy $\varphi$. Then, there is a trace $\PTrace \in \mathcal{L}(P)$ and a computation $\Comput$ such that $\matches{\Comput}{\PTrace}$ and $\Comput \models \neg \varphi$. This means by Lemma \ref{lem:TSL_LTL} that $\UPredSeq(\Comput) \lmodels \neg \varphi$, so $\UPredSeq(\Comput)$ is accepted by $A_{\neg \varphi}$. Then, $\combine{\PTrace}{\UPredSeq(\Comput)}$ is a trace of $P \BuechiProd A_{\neg \varphi}$. By Lemma \ref{lem:corr3}, $\matches{\widetilde{\Comput}}{\combine{\PTrace}{\UPredSeq(\Comput)}}$, so this is also a feasible trace.
\end{proof}

\subsection{Proof of Theorems \ref{thm:BuechiProdModelCheckingAFH} and \ref{thm:BuechiProdModelCheckingAFH2}} \label{sec:BuechiProd_corr2}
 As the two theorems are dual, it suffices to give the proof for Theorem \ref{thm:BuechiProdModelCheckingAFH}.
 
The proof is analogous to the proof of Theorem \ref{thm:BuechiProdModelChecking}, but we have to deal with multiple traces and thus even more indices now. We give it here for completeness.

Given $n$ program traces $\PTrace_{\pi_1}, \dots \PTrace_{\pi_n}$, we define $\PTrace_j = ({(\PTrace_{\pi_1})_{\pi_1}}_j; {(\PTrace_{\pi_2})_{\pi_2}}_j; \dots {(\PTrace_{\pi_n})_{\pi_n}}_j)$ and $\PTrace = \PTrace_1 \PTrace_2 \dots$. Let $\matches{\Comput_{\pi_1}}{\PTrace_{\pi_1}} \wedge \dots \wedge \matches{\Comput_{\pi_n}}{\PTrace_{\pi_n}}$. Let $\HComput = \ExtComput{\emptyset}{\pi_1}{\Comput_{\pi_1}} \dots \ExtComput{}{\pi_n}{\Comput_{\pi_n}}$. Those computations are extendable to a computation that matches $\combine{\PTrace}{\UPredSeq(\HComput)}$ For every time point $t$, we need to introduce the computation steps that match $\combine{\PTrace_t}{X_t} = \mathit{save\_values};~\PTrace_t;~\mathit{new\_inputs};~\mathit{check\_preds}_{X_t};~ \mathit{check\_updates}_{X_t}$. While executing $\mathit{save\_values}$, the values of the relevant temporary variables are changed as required by the statements $tmp_j := \HFuncTerm{j}$. After the actual statements $\PTrace_t$ are executed, the computation changes to $\HComput_t$, but still with the `old' inputs and extended with values for the temporal variables. Next, when executing $\mathit{new\_inputs}$, we stepwise change the input values to those in $\HComput_t$. Then, the assertions are executed and the computation cannot change anymore. 

In the following, we also need the notion of extending a hyper-assignment: we define $\HAssign[c \mapsto v](c) = v$ and $\HAssign[c \mapsto v](c') = \HAssign(c')$ for $c \neq c'$.

Let $\USet \subseteq \HUpdTerms$ be in the following the set of update terms and $\PredSet \subseteq \HPredTerms$ the predicate terms appearing in the formula $\varphi$.

\begin{definition} \label{def:hadaptedComput}
    Let $\Inputs \times \TraceVs = \{i_1, \dots i_k\}$ be the set of inputs and $\USet = \{\Upd{c_1}{\HFuncTerm{1}}, \dots ,\Upd{c_m}{\HFuncTerm{m}}\}$. Given computations $\Comput_{\pi_1} \dots \Comput_{\pi_n}$, let $\HComput = \ExtComput{\emptyset}{\pi_1}{\PTrace_{\pi_1}} \dots \ExtComput{}{\pi_n}{\PTrace_{\pi_n}}$. We define the \textbf{adapted computation} $\widetilde{(\Comput_{\pi_1}, \dots ,\Comput_{\pi_n})}$.
    \begin{align*}
        \HAssign^{\mathit{tmp}_1}_t &:= \HComput_{t-1} [\mathit{tmp}_1 \mapsto \Eval(\FuncTerm{1},\HComput_{t-1})] \\
        \HAssign^{\mathit{tmp}_j}_t &:= \HAssign^{\mathit{tmp}_{j-1}} [\mathit{tmp}_j \mapsto \Eval(\FuncTerm{j},\HComput_{t-1})] ~~~~~~~~~~~~~~~~~~~\text{for } 1 < j \leq m\\
        \HAssign^{\pi_j}_t &= (\HComput_{t-1}\ExtComput{}{\pi_1}{\Comput_{\pi_1}} \dots \ExtComput{}{\pi_j}{\Comput_{\pi_j}})_t
         [i_1 \mapsto \Comput_{t-1}(i_1),~\dots~,i_k \mapsto \Comput_{t-1}(i_k), \\
         &~~~~\mathit{tmp}_1 \mapsto \Eval(\FuncTerm{1},\HComput_{t-1}), \dots , \mathit{tmp}_m \mapsto \Eval(\FuncTerm{m},\HComput_{t-1})] \\
         \HAssign^{i_1}_t &:= \HAssign_t^{\pi_n} [i_1 \mapsto \HComput_t(i_1)] \\
        \HAssign^{i_j}_t &:= \HAssign^{i_{j-1}} [i_j \mapsto \HComput_t(i_j)] ~~~~~~~~~~~~~~~~~~~~~~~~~~~~\text{for } 1 < j \leq k\\
         \widetilde{(\Comput_{\pi_1}, \dots ,\Comput_{\pi_n})^t} &:= \HAssign^{\mathit{tmp}_1}_t \dots \HAssign^{\mathit{tmp}_n}_t ~\HAssign^{\pi_1}_t \dots \HAssign^{\pi_n}_t~\HAssign^{i_1}_t~\dots~\HAssign^{i_k}_t~\HAssign^{i_k}_t~\HAssign^{i_k}_t \\
         \widetilde{(\Comput_{\pi_1}, \dots ,\Comput_{\pi_n})} &:= \widetilde{(\Comput_{\pi_1}, \dots ,\Comput_{\pi_n})^0}~ \widetilde{(\Comput_{\pi_1}, \dots ,\Comput_{\pi_n})^1}\dots 
    \end{align*}
\end{definition}

Note that this is the only possibility to extend the computations $\matches{\Comput_{\pi_1}}{\PTrace_{\pi_1}}, \dots, \matches{\Comput_{\pi_n}}{\PTrace_{\pi_n}}$ to a computation that potentially matches $\combine{\PTrace}{X}$ for any $X$.

We can also define the left inverse of this operation: reducing a computation that matches $\combine{\PTrace}{X}$ to computations that match $\PTrace_{\pi_1}, \dots , \PTrace_{\pi_n}$ as follows.

\begin{definition}
    Let $1 \leq j \leq n, \PTrace \in \Stmt^\omega, X \in \mathcal{P}(\HPredTerms \cup \HUpdTerms)^\omega$ and $\matches{\HComput}{\combine{\PTrace}{X}}$.
    We define the index of the computation step of $(\PTrace_{\pi_{j}})_t$ in $\combine{\PTrace}{X}$
    \begin{align*}
        \iota(t) &:= (|\Inputs \times \TraceVs| + |\USet| + n + 2) \cdot (t+1) - 3
    \end{align*}
    We define the \textbf{reduced computation} $\Comput_{|\pi_j}$.
    \begin{align*}
        \Comput_{|\pi_j} &:= (\Comput_{\iota(0)})_{| (\Inputs \cup \Cells)_{\pi_j}} ~(\Comput_{\iota(1)})_{| (\Inputs \cup \Cells)_{\pi_j}} \dots
    \end{align*}
    where $\HAssign_{| (\Inputs \cup \Cells)_{\pi_j}}$ means restricting the domain of the assignment to the cells and inputs labeled with $\pi_j$, thus excluding the temporal variables $tmp_1, tmp_2 \dots$ and the variables from other traces. Moreover, the cells and inputs are again renamed from $c_{\pi_j}$ to $c$ or $i_{\pi_j}$ to $i$.
\end{definition}

Note that if $\matches{\HComput}{\combine{\PTrace}{X}}$, we also have that $\HComput$ is the adapted computation of $(\HComput_{|\pi_1}, \dots ,\HComput_{|\pi_n})$ as this is the \textit{only} computation that could match $\combine{\PTrace}{X}$ and equals $\Comput_{|\pi_j}$ when restricted to $\pi_j$. 

\begin{lemma} \label{lem:hcorr1}
    If $\matches{\HComput}{\combine{\PTrace}{X}}$ and $\PTrace = ((\PTrace_{\pi_1})_{\pi_1}, \dots ,(\PTrace_{\pi_n})_{\pi_n})$, then $\matches{\HComput_{|\pi_j}}{\PTrace_{\pi_j}}$ for every $1 \leq j \leq n$.
\end{lemma}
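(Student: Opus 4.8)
The statement to prove (Lemma~\ref{lem:hcorr1}) is the multi-trace analogue of Lemma~\ref{lem:corr1}, so the plan is to mimic that proof while carefully tracking the extra indexing introduced by the $n$ trace copies and the renaming of cells $c_{\pi_j} \mapsto c$. Concretely, I would fix $j$ with $1 \le j \le n$ and show $\forall t \in \mathbb{N}.~\matchest{\HComput_{|\pi_j}}{\PTrace_{\pi_j}}{t}$, i.e.\ that the $j$-th reduced computation matches the $j$-th program trace at every time point, by a case distinction on the shape of $(\PTrace_{\pi_j})_t$, which by Def.~\ref{def:grammar} is one of $\mathit{assert}(\PredTerm)$, $c := \FuncTerm{}$, or $c := *$.

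First I would invoke the observation recorded just before the lemma: since $\matches{\HComput}{\combine{\PTrace}{X}}$, the hyper-computation $\HComput$ must be exactly the adapted computation $\widetilde{(\HComput_{|\pi_1}, \dots, \HComput_{|\pi_n})}$ of Def.~\ref{def:hadaptedComput} — it is the only computation that can match a combined trace and restrict correctly on each $\pi_i$. This gives me an explicit description of all the intermediate assignments $\HAssign^{\mathit{tmp}}_t, \HAssign^{\pi_i}_t, \HAssign^{i_\ell}_t$, and in particular pins down which indices in the flattened combined trace correspond to the statement $(\PTrace_{\pi_j})_t$: that is the role of $\iota(t) = (|\Inputs \times \TraceVs| + |\USet| + n + 2)\cdot(t+1) - 3$ together with the offset $n - j$ (the $j$-th program statement sits among the block $\HAssign^{\pi_1}_t \dots \HAssign^{\pi_n}_t$, so the relevant index is $\iota(t) - (n - j)$, analogous to $\iota(t) - |\Inputs|$ in the single-trace proof). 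Then, in the $\mathit{assert}(\PredTerm)$ case, I would use $\matchest{\HComput}{\combine{\PTrace}{X}}{\iota(t)-(n-j)}$ to extract that $\Eval(\PredTerm, \HComput_{\iota(t)-(n-j)-1}) = \mathit{true}$ and that no cell in $\Cells^*$ (over all traces) changes at that step; since $\PredTerm$ only mentions cells/inputs tagged with $\pi_j$ and not the $\mathit{tmp}$ variables, restricting to $(\Inputs \cup \Cells)_{\pi_j}$ and renaming gives $\Eval(\PredTerm, (\HComput_{|\pi_j})_{t-1}) = \mathit{true}$, where here $\PredTerm$ is understood as the renamed (single-trace) predicate. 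Finally I would argue that between $\HComput_{\iota(t-1)}$ and $\HComput_{\iota(t)}$ the only cells touched are those in $\Cells^* \setminus \Cells_{\pi_j}$ — namely the $\mathit{tmp}$ cells, the input cells, and the cells of the other traces $\pi_i$, $i \ne j$ — so the $\pi_j$-cells are preserved, giving $\forall c \in \Cells.~(\HComput_{|\pi_j})_{t-1}(c) = (\HComput_{|\pi_j})_t(c)$, which is exactly what $\matchest{\HComput_{|\pi_j}}{\PTrace_{\pi_j}}{t}$ demands in this case. The two remaining cases ($c := \FuncTerm{}$ and $c := *$) are entirely analogous, using the corresponding clause of Def.~\ref{def:matches}.

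The main obstacle I anticipate is purely bookkeeping: getting the offsets right so that the index $\iota(t) - (n-j)$ genuinely points at the copy of $(\PTrace_{\pi_j})_t$ inside the flattened block $\mathit{save\_values};~\PTrace_t;~\mathit{new\_inputs};~\mathit{check\_preds}_{X_t};~\mathit{check\_updates}_{X_t}$, and that the ``nothing relevant changes in between'' argument correctly accounts for the three padding repetitions $\HAssign^{i_k}_t \HAssign^{i_k}_t \HAssign^{i_k}_t$ at the end of each time block and the $\mathit{save\_values}$ prefix of the next block. The conceptual content — that a predicate/assignment over $\pi_j$-variables is insensitive to changes in $\mathit{tmp}$-cells, input-cells, and other-trace cells — is immediate from the fact that those syntactic objects simply do not mention those symbols; the delicacy is only in making the index arithmetic match Def.~\ref{def:hadaptedComput} exactly. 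I would therefore present the $\mathit{assert}$ case in full detail (making the offset explicit) and dispatch the other two with ``analogous'', exactly as the single-trace Lemma~\ref{lem:corr1} does.
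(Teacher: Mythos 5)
Your proposal follows essentially the same route as the paper's proof: identify $\HComput$ with the adapted computation $\widetilde{(\HComput_{|\pi_1},\dots,\HComput_{|\pi_n})}$, locate the index of $(\PTrace_{\pi_j})_t$ in the flattened combined trace, and argue that the renamed predicate/assignment is insensitive to $\mathit{tmp}$-cells, input reassignments, and other traces' cells. The only discrepancy is the stated offset $\iota(t)-(n-j)$, which should be $\iota(t)-|\Inputs\times\TraceVs|-(n-j)$ as in the paper (consistent with the single-trace analogue $\iota(t)-|\Inputs|$ that you yourself cite), a bookkeeping slip you already flag as the point needing care.
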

\begin{proof}
    We show that $\forall t \in \mathbb{N}. ~\matchest{\HComput_{|\pi_j}}{\PTrace_{\pi_j}}{t}$.

    Recall that $\HComput$ is the adapted computation of $(\HComput_{|\pi_1}, \dots ,\HComput_{|\pi_n})$ .

    \begin{itemize}
        \item Case $\PTrace_{t} = \mathit{assert}(\PredTerm)$ \\
        We know that $\matchest{\HComput}{\combine{\PTrace}{X}}{\iota(t) - |\Inputs \times \TraceVs| - (n-j)}$, and  thus
        \begin{align*}
            &\Eval(\HPredTerm, \HComput_{\iota(t)-|\Inputs \times \TraceVs|-(n-j)-1}) = true~~ \wedge \\ &\forall c \in \Cells^*.~\HComput_{\iota(t)-|\Inputs \times \TraceVs| - (n-j)}(c) = \HComput_{\iota(t)-|\Inputs \times \TraceVs|(n-j)-1}(c)
        \end{align*} 
        Moreover, $(\HComput_{\iota(t)-|\Inputs \times \TraceVs|-(n-j)-1})$ equals $\HAssign^{tmp_m}_t$ if $j=0$ and else $\HAssign^{\pi_{j-1}}_t$, which both equals $(\HComput_{|\pi_j})_{t-1}$ when restricted to the inputs and variables from $\pi_j$. $\PredTerm$ does not contain variables from other traces or temporary variables, thus $\Eval(\PredTerm, (\HComput_{|\pi_j})_{t-1})$ is also true. It remains to show that             
        \begin{align*}
            \forall c \in \Cells.~((\HComput)_{\iota(t-1)})_{|(\Inputs \cup \Cells)_{\pi_j}} (c) = ((\HComput)_{\iota(t)})_{|(\Inputs \cup \Cells)_{\pi_j}} (c)
        \end{align*}
        This is also true as the only cells changed in $\HComput_{\iota(t-1)} \dots \HComput_{\iota(t)-|\Inputs \times \TraceVs|-(n-j)-1}$ and in $\HComput_{\iota(t)-(n-j)-|\Inputs \times \TraceVs|}, \dots \HComput_{\iota(t)}$ are cells from $\Cells^* \backslash \Cells$ or cells from other traces.
        \item The two remaining cases are analogous.
    \end{itemize}
\end{proof}

\newcommand{\longseq}{\UPredSeq(\emptyset \ExtComput{}{\pi_1}{\HComput_{|\pi_1}} \dots \ExtComput{}{\pi_n}{\HComput_{|\pi_n}} )}

\begin{lemma} \label{lem:hcorr2}
    If $\matches{\HComput}{combine(\PTrace, X)}$, then $X = \longseq$ 
\end{lemma}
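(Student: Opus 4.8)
The plan is to lift the proof of Lemma~\ref{lem:corr2} to the $n$-trace setting, splitting into a hyper-predicate part and a hyper-update part and paying for it with heavier index bookkeeping. Write $\HComput'$ for the hyper-computation $\emptyset\ExtComput{}{\pi_1}{\HComput_{|\pi_1}}\dots\ExtComput{}{\pi_n}{\HComput_{|\pi_n}}$, so the goal is $X = \UPredSeq(\HComput') = \longseq$; I would fix a time point $t$ and show $X_t = \longseq_t$, which suffices. The first fact I would record is that, by Lemma~\ref{lem:hcorr1}, the hypothesis $\matches{\HComput}{\combine{\PTrace}{X}}$ entails $\matches{\HComput_{|\pi_j}}{\PTrace_{\pi_j}}$ for all $j$, so by the remark after Definition~\ref{def:hadaptedComput} the computation $\HComput$ is exactly the adapted computation of $(\HComput_{|\pi_1},\dots,\HComput_{|\pi_n})$. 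Consequently $\HComput_{\iota(t)}(\mathit{tmp}_j) = \Eval(\HFuncTerm{j},\HComput'_{t-1})$ for each $\Upd{c_j}{\HFuncTerm{j}}\in\USet$, and $\HComput_{\iota(t)}$ agrees with $\HComput'_t$ on every variable that is not a temporary (modulo the $\pi_j$-renaming built into the definition of $\HComput_{|\pi_j}$).

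Next I would handle hyper-predicate terms. The computation step of $\combine{\PTrace}{X}$ whose basic statement is $\mathit{check\_preds}_{X_t}$ lies at index $\iota(t)+1$, so from $\matchest{\HComput}{\combine{\PTrace}{X}}{\iota(t)+1}$ and the fact that an assertion changes no cell, the conjunction $\bigwedge_{\HPredTerm\in X_t}\HPredTerm\wedge\bigwedge_{\HPredTerm\in\PredSet\setminus X_t}\neg\HPredTerm$ evaluates to $\mathit{true}$ under $\HComput_{\iota(t)}$, hence --- since hyper-predicate terms mention no temporary and, after the renaming, no variable of a foreign trace --- under $\HComput'_t$ as well. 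That yields, for every $\HPredTerm\in\PredSet$, the chain $\HPredTerm\in\longseq_t \Leftrightarrow t,\emptyset,\HComput'\models\HPredTerm \Leftrightarrow \Eval(\HPredTerm,\HComput'_t)=\mathit{true} \Leftrightarrow \HPredTerm\in X_t$.

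Then I would handle hyper-update terms. The step whose basic statement is $\mathit{check\_updates}_{X_t}$ sits at index $\iota(t)+2$, so $\matchest{\HComput}{\combine{\PTrace}{X}}{\iota(t)+2}$ says that, for each $\Upd{c_j}{\HFuncTerm{j}}\in\USet$, the equation $c_j=\mathit{tmp}_j$ holds under $\HComput_{\iota(t)}$ iff $\Upd{c_j}{\HFuncTerm{j}}\in X_t$. Substituting $\HComput_{\iota(t)}(\mathit{tmp}_j)=\Eval(\HFuncTerm{j},\HComput'_{t-1})$ and $\HComput_{\iota(t)}(c_j)=\HComput'_t(c_j)$ from the first paragraph turns this into ``$\Eval(\HFuncTerm{j},\HComput'_{t-1})=\HComput'_t(c_j)$ iff $\Upd{c_j}{\HFuncTerm{j}}\in X_t$'', i.e.\ $\Upd{c_j}{\HFuncTerm{j}}\in\longseq_t \Leftrightarrow \Upd{c_j}{\HFuncTerm{j}}\in X_t$. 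Combined with the predicate case this establishes $X_t=\longseq_t$, and since $t$ was arbitrary, $X=\longseq$.

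The main obstacle I anticipate is not conceptual but bookkeeping: the index function $\iota$ now carries the offset $|\Inputs\times\TraceVs|+|\USet|+n+2$, and one must verify carefully that between indices $\iota(t-1)$ and $\iota(t)$ no cell changes except temporaries and foreign-trace cells, and that the renaming turning $c_{\pi_j}$ back into $c$ really makes ``$\HPredTerm$ visible to $\HComput'_t$'' hold on the nose. All the algebraic equivalences are inherited verbatim from Lemma~\ref{lem:corr2}; the work is in making this airtight across $n$ interleaved traces.
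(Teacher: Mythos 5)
Your proposal follows the paper's proof essentially verbatim: the same reduction to showing $X_t = \UPredSeq(\HComput')_t$ pointwise, the same use of the adapted-computation identity to relate $\HComput_{\iota(t)}$ to $\HComput'_t$ and to pin down the temporaries' values, and the same two chains of equivalences for the $\mathit{check\_preds}_{X_t}$ step at $\iota(t)+1$ and the $\mathit{check\_updates}_{X_t}$ step at $\iota(t)+2$. The only cosmetic difference is that you evaluate the assertions at $\HComput_{\iota(t)}$ where the paper uses $\HComput_{\iota(t)+1}$, which coincide because assertions change no cells, as both arguments note.
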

\begin{proof}
    Set $\HComput' = \emptyset \ExtComput{}{\pi_1}{\HComput_{|\pi_1}} \dots \ExtComput{}{\pi_n}{\HComput_{|\pi_n}}$. We prove $\forall t.~ X_t = \UPredSeq(\HComput')_t$.
    We know that $\matchest{\HComput}{\combine{\PTrace}{X}}{\iota(t)+1}$. The corresponding statement is $check\_preds_{X_t}$. Set $h = \left(\bigwedge_{\PredTerm \in {X_t}} \HPredTerm \wedge \bigwedge_{\HPredTerm \in \PredSet \backslash {X_t}} \neg \PredTerm \right)$. This means that 
    \begin{align*}
        \Eval(h, \HComput_{\iota(t)+1}) = true ~~~\wedge \forall c \in \Cells^*.~\Comput_{\iota(t)+1}(c) = \HComput_{\iota(t)}(c)
    \end{align*}

    Recall that $\HComput_{\iota(t)+1}$ is by Definition \ref{def:hadaptedComput} equal to
    \begin{align*}
        &(\HComput'_{t-1}\ExtComput{}{\pi_1}{\HComput'_{\pi_1}} \dots \ExtComput{}{\pi_j}{\HComput'_{\pi_n}})_t~
         [\mathit{tmp}_1 \mapsto \Eval(\HFuncTerm{1}, \HComput'_{t-1}), \dots , \mathit{tmp}_m \mapsto \Eval(\HFuncTerm{m}, \HComput'_{t-1})] \\
         &=  \HComput'_t~
         [\mathit{tmp}_1 \mapsto \Eval(\HFuncTerm{1}, \HComput'_{t-1}), \dots , \mathit{tmp}_m \mapsto \Eval(\HFuncTerm{m}, \HComput'_{t-1})]
    \end{align*}
    $h$ does not contain the temporal variables, so this implies that $\Eval(h, \HComput'_t) = true$. Therefore, for all $\PredTerm \in P$
    \begin{align*}
        \PredTerm \in \UPredSeq(\HComput')_t \Leftrightarrow t, \HComput' \models \PredTerm \Leftrightarrow \Eval(\PredTerm, \Comput'_t) = true \Leftrightarrow \PredTerm \in {X_t}
    \end{align*} 
    The last equivalence holds by the definition of $h$.

    For the update terms, we know that $\matchest{\Comput}{\combine{\PTrace}{X}}{\iota(t)+2}$. The corresponding statement is $\mathit{check\_updates}_{X_t}$. Set $h=\left( \bigwedge_{\Upd{c_j}{\HFuncTerm{j}} \in \USet}
    \begin{cases}
        c_j = \mathit{tmp}_j &\text{if } \Upd{c_j}{\HFuncTerm{j}} \in X_t\\
        c_j \neq \mathit{tmp}_j &\text{else} 
    \end{cases}        
        \right)$ As before, we know that $\Eval(h, \HComput'_t) = true$. Moreover, we know that for each $j$, $\HComput_{\iota(t)+1} (\mathit{tmp}_j) = \Eval(\HFuncTerm{j},\HComput'_{t-1})$ again by Definition \ref{def:hadaptedComput} Therefore, for every $\Upd{c_j}{\HFuncTerm{j}} \in \USet,$
        \begin{align*}
            \Upd{c_j}{\HFuncTerm{j}} \in \UPredSeq(\HComput')_t &\Leftrightarrow t, \HComput' \models \Upd{c_j}{\HFuncTerm{j}} \\ 
            &\Leftrightarrow \Eval({\HFuncTerm{j}}, \HComput'_{t-1}) = \Eval(c_j, \HComput'_t) \\
            &\Leftrightarrow \Eval(c_j = tmp_j, \HComput'_t) = true \\
            &\Leftrightarrow \Upd{c_j}{\HFuncTerm{j}} \in X_t
        \end{align*}
        The last equivalence is again true by the definition of $h$.
\end{proof}

\begin{lemma} \label{lem:hcorr3}
    If $\matches{\Comput_{\pi_1}}{\PTrace_{\pi_1}} \wedge \dots \wedge \matches{\Comput_{\pi_n}}{\PTrace_{\pi_n}}$, then $\matches{\widetilde{(\Comput_{\pi_1}, \dots ,\Comput_{\pi_n})}}{\combine{\PTrace}{\UPredSeq(\HComput')}$, where $\HComput' = \ExtComput{\emptyset}{\pi_1}{\Comput_{\pi_1}} \dots \ExtComput{}{\pi_n}{\Comput_{\pi_n}}}$
\end{lemma}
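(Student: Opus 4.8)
The plan is to replay the proof of Lemma~\ref{lem:corr3}, replacing the single computation by the tuple $(\Comput_{\pi_1},\dots,\Comput_{\pi_n})$ and working with the associated hyper-computation $\HComput' = \ExtComput{\emptyset}{\pi_1}{\Comput_{\pi_1}}\dots\ExtComput{}{\pi_n}{\Comput_{\pi_n}}$. Concretely, I would show $\matchest{\widetilde{(\Comput_{\pi_1},\dots,\Comput_{\pi_n})}}{\combine{\PTrace}{\UPredSeq(\HComput')}}{i}$ for every computation-step index $i$, where the block of $\combine{\PTrace}{\UPredSeq(\HComput')}$ associated with time step $t$ is $\mathit{save\_values};~\PTrace_t;~\mathit{new\_inputs};~\mathit{check\_preds}_{\UPredSeq(\HComput')_t};~\mathit{check\_updates}_{\UPredSeq(\HComput')_t}$ and $\PTrace_t = ((\PTrace_{\pi_1})_{\pi_1})_t;\dots;((\PTrace_{\pi_n})_{\pi_n})_t$.

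First I would dispatch the steps whose matching is immediate from Definition~\ref{def:hadaptedComput}: the assignments $\HAssign^{\mathit{tmp}_1}_t,\dots,\HAssign^{\mathit{tmp}_m}_t$ realise the statements $\mathit{tmp}_j := \HFuncTerm{j}$ of $\mathit{save\_values}$; the assignments $\HAssign^{\pi_1}_t,\dots,\HAssign^{\pi_n}_t$ realise the sub-statements $((\PTrace_{\pi_j})_{\pi_j})_t$ one after another --- here one uses that the cells $\Cells\times\{\pi_j\}$ and inputs $\Inputs\times\{\pi_j\}$ for distinct $j$ are pairwise disjoint, so that the $n$ separate matches $\matches{\Comput_{\pi_j}}{\PTrace_{\pi_j}}$ compose into a single match of the concatenated statement $\PTrace_t$ yielding exactly $\HComput'_t$ (extended by the temporary and old-input values); and the assignments $\HAssign^{i_1}_t,\dots,\HAssign^{i_k}_t$ realise $\mathit{new\_inputs}$. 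None of these require an argument beyond unfolding the construction of $\widetilde{(\cdot)}$.

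The only real content lies in the two assertion steps, and both collapse to Definition~\ref{def:TSL_LTL}, exactly as in Lemma~\ref{lem:corr3}. For $\mathit{check\_preds}$: since the temporary variables and the variables of other traces do not occur in any $\HPredTerm\in\PredSet$, and the assignment read by this step agrees with $\HComput'_t$ on $(\Inputs\cup\Cells)\times\TraceVs$, matching reduces to $\forall\HPredTerm\in\PredSet.~\HPredTerm\in\UPredSeq(\HComput')_t \Leftrightarrow \Eval(\HPredTerm,\HComput'_t)=\mathit{true}$, which is the definition of $\UPredSeq$. For $\mathit{check\_updates}$: using the identity $\widetilde{(\Comput_{\pi_1},\dots,\Comput_{\pi_n})}_{\iota(t)}(\mathit{tmp}_j) = \Eval(\HFuncTerm{j},\HComput'_{t-1})$ from Definition~\ref{def:hadaptedComput}, matching reduces to $\forall\Upd{c_j}{\HFuncTerm{j}}\in\USet.~\Eval(\HFuncTerm{j},\HComput'_{t-1}) = \HComput'_t(c_j) \Leftrightarrow \Upd{c_j}{\HFuncTerm{j}}\in\UPredSeq(\HComput')_t$, again by Definition~\ref{def:TSL_LTL}.

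The main obstacle is purely the index bookkeeping: pinning down, via the map $\iota$ from the reduced-computation definition, exactly which step of $\widetilde{(\Comput_{\pi_1},\dots,\Comput_{\pi_n})}$ is read by each sub-statement inside the $t$-th block, and checking that no cell of $\Cells$ is modified between the start of the block and $\mathit{check\_updates}$ in a way that would spoil the update check --- only cells in $\Cells^*\setminus\Cells$ and cells of other traces change in those intermediate sub-steps. Once the indices line up, every equivalence is the one already appearing in Lemma~\ref{lem:corr3}, so I would keep the write-up terse, referring back to that proof for the shape of the argument and spelling out only the new disjointness-of-trace-variables observation.
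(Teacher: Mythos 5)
Your proposal matches the paper's proof: the paper likewise dismisses the $\mathit{save\_values}$, program-statement, and $\mathit{new\_inputs}$ steps as immediate from the definition of the adapted computation, and spends its effort only on the $\mathit{check\_preds}$ and $\mathit{check\_updates}$ assertions at positions $\iota(t)+1$ and $\iota(t)+2$, reducing each to the definition of $\UPredSeq(\HComput')_t$ via the observations that temporary variables do not occur in the predicate terms and that $\widetilde{(\Comput_{\pi_1},\dots,\Comput_{\pi_n})}_{\iota(t)}(\mathit{tmp}_j) = \Eval(\HFuncTerm{j},\HComput'_{t-1})$. Your explicit disjointness-of-trace-variables remark is a point the paper leaves implicit, but the argument is the same.
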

\begin{proof}
    Set $\HComput = \widetilde{(\Comput_{\pi_1}, \dots ,\Comput_{\pi_n})}$. We have to show that for all $t$, $\matchest{\HComput}{\combine{\PTrace}{\UPredSeq(\HComput')}}{t}$. This is clear for all time steps except for those of kind $\mathit{check\_preds}$ or $\mathit{check\_updates}$ by the definition of $\HComput$.

    First consider $\mathit{check\_preds}$. We need to show that $\forall t$, $\matchest{\HComput}{\combine{\PTrace}{\UPredSeq(\HComput')}}{\iota(t)+1}$. This boils down to
    \begin{align*}
        \Eval\left( \left(\bigwedge_{\HPredTerm \in {\UPredSeq(\HComput')_t}} \HPredTerm \wedge \bigwedge_{\HPredTerm \in \PredSet \backslash {\UPredSeq(\HComput')_t}} \neg \HPredTerm \right), \HComput_{\iota(t)+1} \right) = true
    \end{align*}

    Recall that $\HComput_{\iota(t)+1}$ is by Definition \ref{def:hadaptedComput} equal to
    \begin{align*}
        &(\HComput'_{t-1}\ExtComput{}{\pi_1}{\HComput'_{\pi_1}} \dots \ExtComput{}{\pi_j}{\HComput'_{\pi_n}})_t~
         [\mathit{tmp}_1 \mapsto \Eval(\HFuncTerm{1},\HComput'_{t-1}), \dots , \mathit{tmp}_m \mapsto \Eval(\HFuncTerm{m}, \HComput_{t-1})] \\
         &= \HComput'_t~ [\mathit{tmp}_1 \mapsto \Eval(\HFuncTerm{1},\HComput'_{t-1}), \dots , \mathit{tmp}_m \mapsto \Eval(\HFuncTerm{m}, \HComput_{t-1})] 
    \end{align*}

    Thus, as the temporary variables are not used in $\HPredTerm$, this is equivalent to
    \begin{align*}
        \forall \HPredTerm \in \PredSet.~\HPredTerm \in \UPredSeq(\Comput')_t \Leftrightarrow \Eval((\HPredTerm), \Comput'_t ) = true
    \end{align*}
    This is true by the definition of $\UPredSeq(\Comput)_t$.

    Now consider $\mathit{check\_updates}$. We need to show that $\forall t$, $\matchest{\HComput}{\combine{\PTrace}{\UPredSeq(\HComput')}}{\iota(t)+2}$. This boils down to
    \begin{align*}
        \Eval\left(\left( \bigwedge_{\Upd{c_j}{\HFuncTerm{j}} \in \USet}
        \begin{cases}
            c_j = \mathit{tmp}_j &\text{if } \Upd{c_j}{\HFuncTerm{j}} \in \UPredSeq(\HComput')_t\\
            c_j \neq \mathit{tmp}_j &\text{else} 
        \end{cases}        
            \right), \HComput_{\iota(t)+2} \right) = true
    \end{align*}
    Which is again equivalent to 
    \begin{align*}
        \forall \Upd{c_j}{\FuncTerm{j}} \in \USet.~ \Eval(c_j = \mathit{tmp}_j, \HComput_{\iota(t)+2}) = true \Leftrightarrow \Upd{c_j}{\FuncTerm{j}} \in \UPredSeq(\HComput')_t
    \end{align*}
    We know that $\HComput_{\iota(t)+2}(\mathit{tmp}_j) = \HComput'_{t-1}(\FuncTerm{j})$. Thus this is equivalent to 
    \begin{align*}
        \forall \Upd{c_j}{\HFuncTerm{j}} \in \USet.~ \Eval(\HFuncTerm{j}, \HComput'_{t-1}) = \HComput'_{t}(c) \Leftrightarrow \Upd{c_j}{\HFuncTerm{j}} \in \UPredSeq(\HComput')_t
    \end{align*}
    Which is again true by the definition of $\UPredSeq(\HComput')_t$.
\end{proof}

Now, we have all the lemmas needed to prove Theorem \ref{thm:BuechiProdModelCheckingAFH}
\begin{proof} (Theorem \ref{thm:BuechiProdModelCheckingAFH}) \\
    $\Rightarrow$
    Assume that $P^n \BuechiProd A_{\neg \varphi}$ has a feasible trace. Then, this is a trace $\combine{\PTrace}{X}$ for some $\PTrace \in \mathcal{L}(P^n)$ and $X \in \mathcal{L}(A_{\neg \varphi})$. We know that $\PTrace_t = ({(\PTrace_{\pi_1})_{\pi_1}}_t; \dots ;{(\PTrace_{\pi_n})_{\pi_n}}_t)$. Moreover, $\matches{\HComput}{\combine{\PTrace}{X}}$ for some $\HComput \in \HAssigns^\omega$. By Lemma \ref{lem:hcorr1}, we know that $\matches{\HComput_{|\pi_1}}{\PTrace_{\pi_1}} \wedge \dots \wedge \matches{\HComput_{|\pi_n}}{\PTrace_{\pi_n}}$. Set $\HComput' = \ExtComput{\emptyset}{\pi_1}{\HComput_{|\pi_1}} \dots \ExtComput{}{\pi_n}{\HComput_{|\pi_n}}$. By Lemma \ref{lem:hcorr2}, we know that $X=\UPredSeq(\HComput')$. By the correctness of $A_\psi$ this means that $\UPredSeq(\HComput') \lmodels \neg \varphi$ which by Lemma \ref{lem:TSL_LTL} means that $\HComput' \models \neg \varphi$. Thus, $\HComput_{|\pi_1} \dots \HComput_{|\pi_n}$ are feasible counterexample traces proving that $\forall \pi_1.~ \dots \forall \pi_n.~ \psi$ does not hold.

    $\Leftarrow$
    Assume that $P$ does not satisfy $\varphi$. Then, there are trace $\PTrace_{\pi_1}, \dots \PTrace_{\pi_n} \in \mathcal{L}(P)$ and computations $\Comput_{\pi_1}, \dots \Comput_{\pi_n}$ such that $\matches{\Comput_{\pi_1}}{\PTrace_{\pi_1}} \wedge \dots \wedge \matches{\Comput_{\pi_n}}{\PTrace_{\pi_n}}$ and \\ $\HComput' = \ExtComput{\emptyset}{\pi_1}{\Comput_{\pi_1}} \dots \ExtComput{}{\pi_n}{\Comput_{\pi_n}} \models \neg \varphi$. This means by Lemma \ref{lem:TSL_LTL} that $\UPredSeq(\HComput') \lmodels \neg \varphi$, so $\UPredSeq(\HComput')$ is accepted by $A_{\neg \varphi}$. Set $\PTrace_t = ({(\PTrace_{\pi_1})_{\pi_1}}_t; \dots ;~{(\PTrace_{\pi_n})_{\pi_n}}_t)$ and $\PTrace = \PTrace_0 \PTrace_1 \dots$. Then, $\combine{\PTrace}{\UPredSeq(\Comput)}$ is a trace of $P \BuechiProd A_{\neg \varphi}$. By Lemma \ref{lem:hcorr3}, $\matches{\widetilde{(\Comput_{\pi_1}, \dots, \Comput_{\pi_n})}}{\combine{\PTrace}{\UPredSeq(\Comput')}}$, so this is also a feasible trace.

\end{proof}

\subsection{Proof of Theorem \ref{thm:corr_AE}} \label{sec:corr_AE}

We now prove Theorem \ref{thm:corr_AE}. To do that, we need the following lemma: recall that for a program execution $\sigma_{\pi}$, $(\sigma_{\pi})_{\pi}$ means renaming every cell $c$ in $\sigma_\pi$ to $c_\pi$ and every input $i$ to $i_\pi$.

\begin{lemma} \label{lem:splittrace}
    Let $\PTrace \in \Stmt^\omega$ be feasible and $\PTrace_t = (((\PTrace_{\pi_1})_{\pi_1})_t, \dots , ((\PTrace_{\pi_n})_{\pi_n})_t )$ for some $\sigma_{\pi_1}, \dots \sigma_{\pi_n}$. Then $\PTrace_{\pi_1}, \dots, \PTrace_{\pi_n}$ are also all feasible.
\end{lemma}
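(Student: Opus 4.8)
The plan is to unpack the definition of feasibility for the composed trace and then read off feasibility of each component trace by projecting the witnessing computation onto the cells and inputs of a single trace variable. Recall that $\PTrace$ feasible means there is a hyper-computation $\HComput$ with $\matches{\HComput}{\flatten{\PTrace}}$, i.e.\ $\matchest{\HComput}{\flatten{\PTrace}}{t}$ holds for every $t$. The statements occurring in $\flatten{\PTrace}$ are exactly the basic statements obtained from the blocks $((\PTrace_{\pi_1})_{\pi_1})_t;\dots;((\PTrace_{\pi_n})_{\pi_n})_t$; since the renaming $(\cdot)_{\pi_j}$ tags every cell $c$ as $c_{\pi_j}$ and every input $i$ as $i_{\pi_j}$, each basic statement coming from trace $j$ mentions only variables tagged with $\pi_j$.

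First I would fix $j$ and define the reduced computation $\Comput^{(j)}$ by restricting each hyper-assignment of $\HComput$ to the cells $\Cells\times\{\pi_j\}$ and inputs $\Inputs\times\{\pi_j\}$, and then renaming $c_{\pi_j}\mapsto c$, $i_{\pi_j}\mapsto i$ — this is exactly the projection operation already used in App.~\ref{sec:BuechiProd_corr2} (the $\Comput_{|\pi_j}$ construction), except that here we must also collapse the intermediate micro-steps introduced by $\flatten{}$ so that $\Comput^{(j)}$ is indexed by the original time points of $\PTrace_{\pi_j}$. Concretely, within the block of basic statements at macro-time $t$, only the sub-block originating from $(\PTrace_{\pi_j})_t$ can change $\pi_j$-tagged cells; the sub-blocks for the other trace variables leave all $\pi_j$-cells unchanged, by Def.~\ref{def:matches} applied to the $c':=\FuncTerm{}$, $\mathit{assert}$, and $c:=*$ cases (each only touches its own tagged cell or none). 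Hence the value of the $\pi_j$-cells is constant across the whole macro-block except across the $(\PTrace_{\pi_j})_t$ sub-block, where it evolves exactly as prescribed.

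Then I would verify $\matchest{\Comput^{(j)}}{\flatten{\PTrace_{\pi_j}}}{s}$ for each $s$ by a direct case analysis on the basic statement $(\flatten{\PTrace_{\pi_j}})_s$: for $\mathit{assert}(\PredTerm)$ we use that $\PredTerm$ mentions only $\pi_j$-variables, so $\Eval(\PredTerm_{\pi_j},\HComput_{\iota})=\mathit{true}$ (which holds because $\matches{\HComput}{\flatten{\PTrace}}$) transfers to $\Eval(\PredTerm,\Comput^{(j)}_{s-1})=\mathit{true}$ after projection and renaming; and the ``frame'' conditions $\Comput^{(j)}_s(c)=\Comput^{(j)}_{s-1}(c)$ follow from the fact established above that the other sub-blocks do not alter $\pi_j$-cells. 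The assignment and havoc cases are analogous. This gives $\matches{\Comput^{(j)}}{\flatten{\PTrace_{\pi_j}}}$, i.e.\ $\matches{\Comput^{(j)}}{\PTrace_{\pi_j}}$, so $\PTrace_{\pi_j}$ is feasible; since $j$ was arbitrary, all of $\PTrace_{\pi_1},\dots,\PTrace_{\pi_n}$ are feasible.

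The main obstacle is purely bookkeeping: getting the index arithmetic of $\flatten{}$ right so that the projection of $\HComput$ really is a legitimate computation for $\PTrace_{\pi_j}$ (the micro-steps of $\flatten{}$ for statements of other trace variables must be shown to be ``invisible'' to $\pi_j$, and those of $(\PTrace_{\pi_j})_t$ must line up with a single macro-step of $\PTrace_{\pi_j}$). There is no deep difficulty — it is the same projection argument as in Lemma~\ref{lem:hcorr1}, only applied to the bare self-composition without the $\mathit{combine}$ wrapper — but the notation has to be managed carefully.
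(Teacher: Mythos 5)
Your proposal is correct and follows essentially the same route as the paper's proof: both extract from the witnessing hyper-computation $\HComput$ a computation for each $\pi_j$ by sampling $\HComput$ at the flattened index of $(\PTrace_{\pi_j})_t$ within each macro-block, restricting to the $\pi_j$-tagged cells and inputs (with renaming), and then checking $\matchest{\Comput_{\pi_j}}{\PTrace_{\pi_j}}{t}$ by a case analysis in which the assert case uses that predicates mention only $\pi_j$-variables and the frame conditions use that the other sub-blocks touch only other traces' cells. The index bookkeeping you flag as the main obstacle is handled in the paper exactly as you describe, via the offset $t \cdot n + j - 1$.
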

\begin{proof}
    As $\PTrace$ is feasible, we know that $\matches{\HComput}{\PTrace}$ for some $\HComput$. For all $1 \leq j \leq n$, we define $\Comput_{\pi_j}$ by
    \begin{align*}
        (\Comput_{\pi_j})_t &= (\HComput_{t \cdot n + j - 1})_{|(\Inputs \cup \Cells)_{\pi_j}} \\
        \Comput_{\pi_j} &= (\Comput_{\pi_j})_0~(\Comput_{\pi_j})_1 \dots
    \end{align*}
    where $\HAssign_{| (\Inputs \cup \Cells)_{\pi_j}}$ as before means restricting the domain of the assignment to the cells and inputs labeled with $\pi_j$, thus excluding the variables from other traces. Moreover, the cells and inputs are again renamed from $c_{\pi_j}$ to $c$ or $i_{\pi_j}$ to $i$. $t \cdot m + j - 1$ is the index of $(\PTrace_{\pi_j})_t$ in $\PTrace$.
    
    We show that for all time points $t$, $\matchest{\Comput_{\pi_j}}{\PTrace_{\pi_j}}{t}$
    \begin{itemize}
        \item Case $(\Comput_{\pi_j})_t = \mathit{assert}(\PredTerm)$ \\
                We know that $\matchest{\HComput}{\PTrace}{t \cdot m + j - 1}$ and thus 
                \begin{align*}
                    \Eval(\HPredTerm, \HComput_{t \cdot m + j - 1}) = true \wedge \forall c \in \Cells \times \TraceVs.~\HComput_{t \cdot m + j - 1}(c) = \HComput_{t \cdot m + j - 2}.
                \end{align*}
                Moreover $\Eval(\PredTerm, (\Comput_{\pi_j})_t)$ is also true as $\PredTerm$ does not contain variables from other traces. It remains to show that 
                \begin{align*}
                    \forall c \in \Cells.~((\Comput_{\pi_j})_t)(c) = (\Comput_{\pi_j})_{t-1}(c)
                \end{align*}

                This is also true as the only cells changed in $\HComput_{(t-1) \cdot m + j - 1}, \dots \HComput_{t \cdot m + j - 2}$ are cells from other traces.
        \item The remaining two cases are analogous.
    \end{itemize}
\end{proof}

We now prove Theorem \ref{thm:corr_AE}
\begin{proof}
    Assume that $P^m \backslash (P^n \BuechiProd A_\psi)_{k, C(k')}^\forall$ has a feasible trace $\PTrace$ with $\matches{\HComput}{\PTrace}$. By Lemma \ref{lem:splittrace}, this means that $\matches{\Comput_{\pi_1}}{\PTrace_{\pi_1}} \wedge \dots \wedge \matches{\Comput_{\pi_m}}{\PTrace_{\pi_m}}$. It suffices to show that $\emptyset \ExtComput{}{\pi_1}{\Comput_1} \dots \ExtComput{}{\pi_m}{\Comput_m} \nmodels \exists \pi_{m+1}.~ \dots \exists \pi_n.~\psi$ as this implies that $\Comput_1, \dots \Comput_m$ are a counterexample proving that $P$ does not satifsfy $\varphi$.

    Proof by contradiction. Assume that $\emptyset \ExtComput{}{\pi_1}{\Comput_1} \dots \ExtComput{}{\pi_m}{\Comput_m} \models \exists \pi_{m+1}.~ \dots \exists \pi_n.~\psi$. Then, there are traces $\PTrace_{\pi_{m+1}}, \dots \PTrace_{\pi_n}$ and computations $\Comput_{\pi_{m+1}} \dots \Comput_{\pi_n}$ such that $\matches{\Comput_{\pi_{m+1}}}{\PTrace_{\pi_{m+1}}} \wedge \dots \wedge \matches{\Comput_{\pi_n}}{\PTrace_{\pi_n}}$ and $\HComput' = \emptyset \ExtComput{}{\pi_1}{\Comput_{\pi_1}} \dots \ExtComput{}{\pi_n}{\Comput_{\pi_n}} \models \psi$.\\ Set $\PTrace'_t = ({(\PTrace_{\pi_1})_{\pi_1}}_t; \dots ;{(\PTrace_{\pi_n})_{\pi_n}}_t)$ and $\PTrace' = \PTrace'_0~\PTrace'_1 \dots$. Now, by Lemma \ref{lem:TSL_LTL} and the correctness of $A_\psi$, we know that $\UPredSeq(\HComput')$ is accepted by $A_\psi$, thus $\combine{\PTrace'}{ \UPredSeq(\HComput')}$ is accepted by $P^n \BuechiProd A_{\psi}$. Moreover, by Lemma \ref{lem:hcorr2}, we know that $\combine{\PTrace'}{\UPredSeq(\HComput')}$ is also feasible, so it is also $k$-feasible and thus accepted by $(P^n \BuechiProd A_\psi)_k$. Moreover, does not end with an infeasible cycle and is thus also accepted by $(P^n \BuechiProd A_\psi)_{k, C(k')}$  But this means that $\PTrace$ is accepted by $(P^n \BuechiProd A_\psi)^\forall_{k, C(k')}$. and thus not by $P^m \backslash (P^n \BuechiProd A_\psi)^\forall_{k, C(k')}$. Contradiction.
\end{proof}
\end{document}